\theoremstyle{plain}
\newtheorem{theorem}{Theorem}
\newtheorem{lemma}[theorem]{Lemma}
\newtheorem{proposition}[theorem]{Proposition}
\newtheorem{remark}[theorem]{Remark}
\newcounter{mycount}
\newcounter{mycount2}
\newenvironment{romlist}{\begin{list}{\rm(\roman{mycount2})}%
   {\usecounter{mycount2}\labelwidth=1cm\itemsep 0pt}}{\end{list}}
\newenvironment{letlist}{\begin{list}{\rm(\alph{mycount})}%
   {\usecounter{mycount}\labelwidth=1cm\itemsep 0pt}}{\end{list}}
\numberwithin{equation}{section}
\numberwithin{theorem}{section}
\numberwithin{figure}{section}
\newcommand\HH{{\mathbb H}}
\newcommand\RR{{\mathbb R}}
\newcommand\PP{{\mathbb P}}
\newcommand\qq{\qquad}
\newcommand\q{\quad}
\newcommand\si{\sigma}
\newcommand\be{\beta}
\newcommand\al{\alpha}
\newcommand\Si{\Sigma}
\newcommand\g{\gamma}
\newcommand\FF{{\mathbb F}}
\newcommand\ta{a}
\newcommand\tb{b}
\newcommand\tc{c}
\newcommand\CC{{\mathbb C}}
\newcommand\ZZ{{\mathbb Z}}
\newcommand\VV{{\mathbb V}}
\newcommand\EE{{\mathbb E}}
\newcommand\sD{{\mathcal D}}
\newcommand\La{\Lambda}
\newcommand\eps{\epsilon}
\newcommand\ot{1-2\ }
\newcommand\resp{respectively}
\newcommand\lra{\leftrightarrow}
\newcommand\oo{\infty}
\newcommand\TT{{\mathbb T}}
\renewcommand\th{\theta}
\newcommand\Pf{\mathrm{Pf}\, }
\renewcommand\v{{\mathrm{V}}}
\newcommand\nw{{\mathrm{NW}}}
\newcommand\nea{{\mathrm{NE}}}
\newcommand\es{\varnothing}
\newcommand\Pip{\Pi^{\mathrm{poly}}}
\newcommand\Pipi{\Pief}
\renewcommand\Psi{\La}
\newcommand\Pief{\Pi_{e,f}}
\newcommand\pa{\ell}
\newcommand\Rsup{R_{\text{\rm sup}}}
\newcommand\Rsub{R_{\text{\rm sub}}}
\title[Critical surface of the hexagonal polygon model]{Critical surface of the\\ hexagonal polygon model}
\author{Geoffrey R.\ Grimmett}
\address{Statistical Laboratory, Centre for
Mathematical Sciences, Cambridge University, Wilberforce Road,
Cambridge CB3 0WB, UK}
\email{g.r.grimmett@statslab.cam.ac.uk,}
\urladdr{\url{http://www.statslab.cam.ac.uk/~grg/}}
\author{Zhongyang Li}
\address{Department of Mathematics,
University of Connecticut,
Storrs, Connecticut 06269-3009, USA}
\email{zhongyang.li@uconn.edu}
\urladdr{\url{http://www.math.uconn.edu/~zhongyang/}}
\begin{document}

\begin{abstract}
The hexagonal polygon model arises 
in a natural way via a transformation of the \ot model on the hexagonal lattice,
and it is related to the high temperature expansion of the Ising model. 
There are three types
of edge, and three corresponding parameters $\al,\be,\g>0$. 
By studying the long-range order of a certain two-edge correlation function, it is
shown that the parameter space $(0,\oo)^3$ may be divided into
\emph{subcritical} and \emph{supercritical} regions,
separated by critical surfaces satisfying an explicitly known formula.
This result complements earlier work 
on the Ising model and the \ot model. 
The proof uses the Pfaffian representation of Fisher, Kasteleyn, and Temperley
for the counts of dimers on planar graphs. 
\end{abstract}

\date{29 August 2015, revised 6 March 2016}
\keywords{Polygon model, \ot model, high temperature expansion, 
Ising model, dimer model, perfect matching, Kasteleyn matrix.}
\subjclass[2010]{82B20, 60K35, 05C70}

\maketitle

\section{Introduction}\label{sec:intro}

The polygon model studied here is a process of statistical
mechanics on the space of disjoint unions of closed loops 
on finite subsets of the hexagonal lattice $\HH$ with toroidal boundary conditions.
It arises naturally in the study of the \ot model,
and indeed the main result of the current paper is complementary
to the exact calculation of the critical surface of the \ot model reported
in \cite{GL7, GL6} (to which the reader is
referred for background and current theory of the \ot model). 
The polygon model may in addition
be viewed as an asymmetric version of the $O(n)$ model
with $n=1$ (see \cite{DPSS} for a recent reference  to the $O(n)$ model). 

Let $G=(V,E)$ be a finite subgraph of $\HH$.
The  configuration space $\Si_G$ of the polygon model is the set of all subsets $S$ of
$E$ such that every vertex in $V$ is incident to an even number of members of $S$.
The measure of the model is a three-parameter product
probability measure conditioned on belonging to $\Si_G$,
in which the three parameters are associated with the three classes of edge
(see Figure \ref{fig:hex}). 

This model may be regarded as the high temperature expansion
of a certain inhomogenous Ising model on the hexagonal lattice. The latter is a special
case of the general eight-vertex model of Lin and Wu \cite{LW90}
(see also \cite{WW89}).  Whereas
Lin and Wu concentrated on a mapping between their eight-vertex model
and a generalized Ising model, the current paper utilizes the additional symmetries
of the polymer model to calculate in closed form the equation of the critical surface
for a given choice of order parameter.  
The parameter space of the polymer model extends beyond the set of parameter values
corresponding to the classical Ising model, and thus our overall results
do not appear to follow from classical facts (see also Remarks \ref{rem:isi} and \ref{rem:sym2}).

The order parameter used in this paper is the one that corresponds to 
the two-point correlation function of the
Ising model, namely, the ratio $Z_{G,e\lra f}/Z_G$, 
where $Z_{G,e\lra f}$ is the partition function for
configurations that include a path between two edges $e$, $f$, and $Z_G$
is the usual partition function. 

Here is an overview of the methods used in this paper.
The polymer model may be transformed
into a dimer model on an associated graph, and the above ratio may be expressed 
in terms of the ratio of certain counts of dimer configurations. 
The last may be expressed (by classical results
of Kasteleyn \cite{Kast61,Kast63}, Fisher \cite{F61}, and  Temperley and Fisher \cite{TF61})
as Pfaffians of certain antisymmetric matrices.
The squares of these Pfaffians are determinants, and these
converge as $G \uparrow \HH$ to the determinants of infinite block Toeplitz
matrices. Using results of Widom \cite{HW0, HW} and others, these limits are 
analytic functions of the parameters except for certain parameter values determined by the
spectral curve of the dimer model. 
This spectral curve has an explicit representation, and this enables a computation
of the critical surface of the polygon model upon which the limiting order parameter is
non-analytic. 

More specifically, the parameter space $(0,\oo)^3$ may be partitioned into 
two regions, termed  the \emph{supercritical} and \emph{subcritical} phases.
The order parameter displays long-range order in the supercritical phase,
but not in the subcritical phase.

The results of the current paper bear resemblance to earlier results of \cite{GL6}, in which
the same authors determine the critical surface of the \ot model. The outline shape
of the main proof (of Theorem \ref{ptp})  is similar to that
of the corresponding result of \cite{GL6}. In contrast, neither result 
seems to imply the other,
and the dimer correspondence and associated calculations of the current paper
are based on a different dimer representation from that of \cite{GL6}.

The characteristics of the hexagonal lattice that are special for this work
include the properties of trivalence, planarity,  and support of a $\ZZ^2$ action. 
It may be possible to extend the results to other such graphs, such as 
the Archimedean $(3,12^2)$ lattice, and the square/octagon $(4,8^2)$ lattice.

This article is organized as follows. The polygon model is defined in Section  
\ref{sec:model}, and the main Theorem \ref{ptp} is given in Section \ref{sec:main}.
The relationship between the polygon model and the
\ot model, the Ising model,  and the dimer model
is explained in Section \ref{ssec:12}. The characteristic polynomial
of the corresponding dimer model is calculated in Section \ref{ssec:sc},
and Theorem \ref{ptp} is proved in Section \ref{sec:proof}.

\section{The polygon model}\label{sec:model}

We begin with a description of the polygon model. 
Its relationship to the \ot model is explained in  Section \ref{sec:12}.
The main result (Theorem \ref{ptp}) is given in Section \ref{sec:main}.

\subsection{Definition of the polygon model}\label{ssec:polyg}

Let the graph $G=(V,E)$ be a finite connected subgraph of the hexagonal 
lattice $\HH=(\VV,\EE)$, suitably embedded 
in $\RR^2$ as in Figure \ref{fig:hex}.
The embedding of $\HH$ is chosen in such a way
that each edge may be described by one of the three directions: 
horizontal, NW, or NE. (Later we
shall consider a finite box with toroidal boundary conditions.) 
Horizontal edges are said to be of type $\ta$, and NW edges (\resp, NE edges)
type $\tb$ (\resp, type $\tc$), as illustrated in Figure \ref{fig:hex}.
Note that $\HH$ is a bipartite graph, and we call the two classes of vertices \emph{black} and \emph{white}.

Let $\Pi$ be the product space $\Pi=\{0,1\}^E$.
The sample space of the polygon model is the subset $\Pip=\Pip(G)\subseteq\Pi$ 
containing all $\pi=(\pi_e: e \in E) \in\Pi$ such that 
\begin{equation}\label{eq:polycond}
\sum_{e\ni v} \pi_e \q\text{is either $0$ or $2$},\qq  v \in V.
\end{equation} 
Each $\pi\in\Pip$ may be considered as a union of vertex-disjoint cycles of $G$, together
with isolated vertices. We identify $\pi\in\Pi$ with the set $\{e\in E:\pi_e=1\}$  of `open' edges
under $\pi$.  Thus \eqref{eq:polycond} requires that every vertex is incident
to an even number of open edges.

\begin{figure}[htbp]
\centerline{\includegraphics*[width=0.45\hsize]{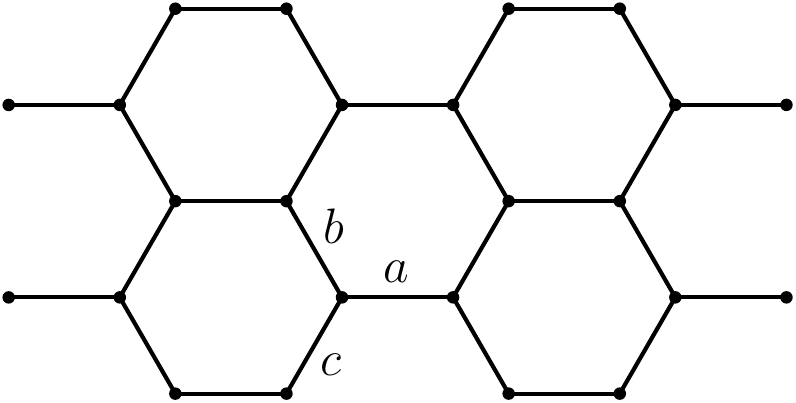}}
   \caption{An embedding of the hexagonal lattice, with the edge-types marked.}
   \label{fig:hex}
\end{figure}

Let $\eps_\ta, \eps_\tb, \eps_\tc\ne 0$. To the configuration $\pi\in\Pip$, we assign the weight
\begin{equation}\label{eq:polywt}
w(\pi)= \eps_\ta^{2|\pi(\ta)|}\eps_\tb^{2|\pi(\tb)|} \eps_\tc^{2|\pi(\tc)|},
\end{equation}
where $\pi(s)$ is  the set of open $s$-type edges of $\pi$. 
The weight function $w$ gives rise to the partition function
\begin{equation}\label{eq:pm}
Z_G(P)=\sum_{\pi\in\Pip} w(\pi).
\end{equation}
This, in turn, gives rise to a probability measure on $\Pip$ given by
\begin{equation}\label{eq:probm}
\PP_G(\pi) = \frac1{Z_G(P)}w(\pi),\qq \pi\in\Pip.
\end{equation}
The measure $\PP_G$ may be viewed as a product measure conditioned on
the outcome lying in $\Pip$.
We concentrate here on an order parameter to be given next.

It is convenient  to view the polygon model as a model on half-edges.
To this end, let $A G=(A V, A E)$ be the graph derived 
from $G=(V,E)$ by adding
a vertex at the midpoint of each edge in $E$. Let $M E=\{M e: e \in E\}$ be the set
of such midpoints, and $A V = V \cup M V$.  The edges $A E$
are precisely the half-edges of $E$, each being of the form $\langle v, Me\rangle$
for some $v \in V$ and incident edge $e \in E$.  A polygon configuration on
$G$ induces a polygon configuration on $A G$, which may be described as a 
subset of $AE$ with the property that every vertex in $AV$ has even degree. For an $a$-type edge $e\in E$,
the two half-edges of $e$ are assigned 
weight $\eps_a$ (and similarly for $b$- and $c$-type edges).
The weight function $w$ of \eqref{eq:polywt} may now be expressed as
\begin{equation}\label{eq:wt2}
w(\pi)= \eps_a^{|\pi(a)|}\eps_b^{|\pi(b)|} \eps_\tc^{|\pi(c)|},
\qq \pi\in\Pip(AG).
\end{equation}

We introduce next the order parameter of the polygon model.
Let $e, f\in ME$ be distinct midpoints of $AG$, and let $\Pipi$ be the subset of all $\pi\in\{0,1\}^{AE}$
such that: (i) every $v\in AV$ with $v\ne e,f$ is incident to an even number of open half-edges,
and (ii)  the midpoints of $e$ and $f$ are incident to exactly one open half-edge.
We define the order parameter as
\begin{equation}\label{eq:edgecor2}
M_G(e,f)=\frac{Z_{G,e\lra f}}{Z_G(P)},
\end{equation}
where 
\begin{equation}\label{eq:edgecor3}
Z_{G,e\lra f} := \sum_{\pi\in\Pipi} \eps_a^{|\pi(a)|}\eps_b^{|\pi(b)|} \eps_\tc^{|\pi(c)|}.
\end{equation}

\begin{remark}[Notation]\label{rem:pm}
We write $\eps_s$ for the parameter of an $s$-type edge, and $\eps_g$ for that
of edge $g$. For conciseness of
notation, we shall later work with the parameters
$$
\al:=\eps_a^2, \q \be:= \eps_b^2, \q \g:= \eps_\tc^2,
$$
and the main result, Theorem \ref{ptp}, is expressed in terms of these new variables.
The `squared' variables $\eps_s^2$ are introduced to permit use
of the `unsquared' signed variables $\eps_s$ in the definition 
of the polymer model on $AG$.  

The weight functions of \eqref{eq:polywt} and \eqref{eq:wt2}
are unchanged  under the sign change $\eps_s \to -\eps_s$ 
for $s\in\{a,b,c\}$.
Similarly, if the edges $e$ and $f$ have the same type, then, for $\pi\in\Pief$,
the weight $w(\pi)$ of \eqref{eq:edgecor3} is unchanged
under such sign changes. Therefore, if $e$ and $f$ have the same type,
the order parameter $M_G(e,f)$ is independent of the sign of the $\eps_g$,
and may be considered as a function of the parameters $\al$, $\be$, $\g$. 
\end{remark}

\begin{remark}[High temperature expansion]\label{rem:isi}
If  $(\al,\be,\g) \in (0,1)^3$,
the polygon model with weight function \eqref{eq:wt2}
is immediately recognized as the high temperature expansion of
an inhomogeneous Ising model on $AG$ in which the edge-interaction $J_s$ of 
an $s$-type half-edge 
satisfies  $\tanh J_s = |\eps_s|$. Under this condition,
the order parameter $M_G(e,f)$ of \eqref{eq:edgecor2} is simply
a two-point correlation function of the Ising model (see Lemma \ref{lem:12poly}).
If the $|\eps_s|$ are sufficiently small, this Ising model
is a high-temperature model, whence $M_G(e,f)$ tends to zero in the double limit
as $G \uparrow \HH_n$ and $|e-f|\to\oo$, in that order. 
It may in fact be shown (by results of \cite{ZL1,ZL-sc,lis} or otherwise)
that this Ising model has critical surface given by the equation
$\al\be+\be\g+\g\al=1$.

See \cite[p.\ 75]{Bax} and \cite{MW73,vdW} for accounts of the high temperature
expansion, and \cite{GJ09} for a recent related paper. 
The above Ising model may be viewed as a special case of the
eight-vertex model of Lin and Wu  \cite{LW90}.
It is studied further in \cite[Sect.\ 4]{GL6}.
\end{remark}

\subsection{The toroidal hexagonal lattice}\label{sec:torhex}

We will work mostly with a finite subgraph of $\HH$ subject to toroidal boundary conditions.
Let $n \ge 1$, and let $\tau_1$, $\tau_2$ be the two shifts of $\HH$, illustrated in Figure \ref{fig:hex0},
that map an elementary hexagon to the next hexagon in the given directions.
The pair $(\tau_1,\tau_2)$ generates a $\ZZ^2$ action on $\HH$, and we write 
$\HH_n=(V_n,E_n)$ for
the quotient graph of $\HH$ under the subgroup of $\ZZ^2$ generated 
by the powers $\tau_1^n$ and $\tau_2^n$.  The resulting $\HH_n$ is illustrated in Figure
\ref{fig:hex0}, and may be viewed as a finite subgraph of $\HH$ subject to toroidal
boundary conditions. We write $M_n(e,f):= M_{\HH_n}(e,f)$.

\begin{figure}[htbp]
\centering
\scalebox{1}[1]{\includegraphics*[width=0.6\hsize]{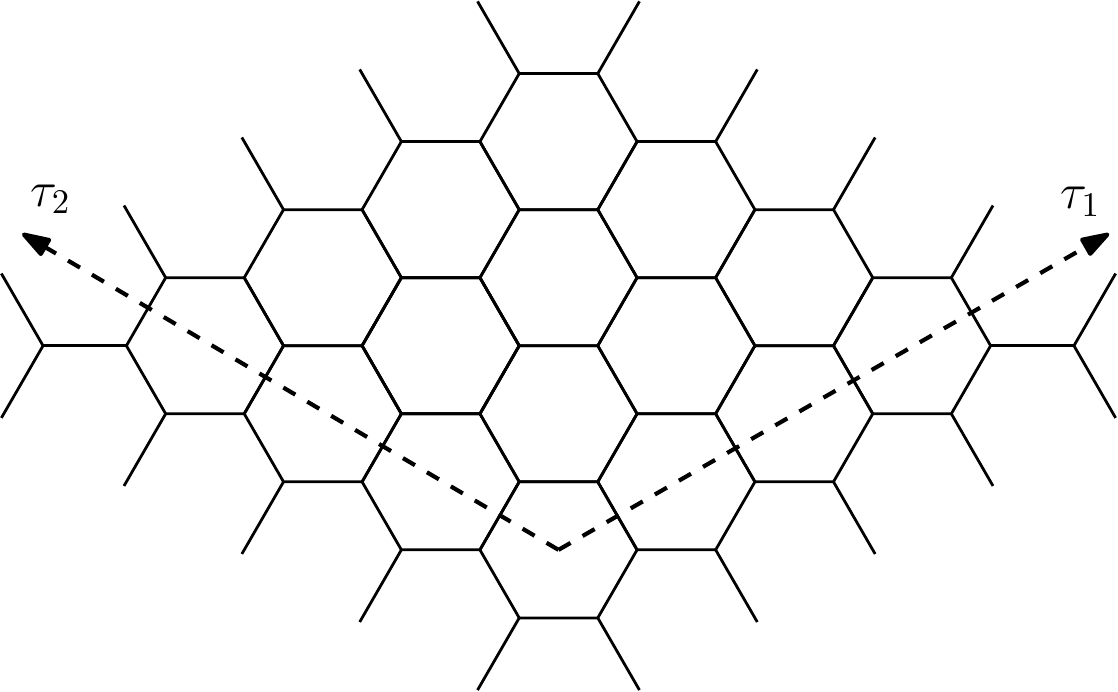}}
\caption{The graph $\HH_n$ is an $n\times n$ `diamond' wrapped onto a torus,
as illustrated here with $n=4$.}\label{fig:hex0}
\end{figure}

As in Remark \ref{rem:pm}, let 
\begin{equation}
\al=\eps_a^2,\q \be=\eps_b^2,\q \g=\eps_\tc^2.\label{sabc}
\end{equation}

\begin{theorem}\label{thm:sym}
Let $e$, $f$ be distinct edges of $\HH_n$.
The order parameter $M_n(e,f)=M_n^{\al,\be,\g}(e,f)$ is 
invariant under the change of variables $(\al,\be,\g) \mapsto (\al,\be^{-1},\g^{-1})$,
and similarly under the other two changes of variables in which exactly two of the
parameters $\al$, $\be$, $\g$ are replaced by their reciprocals. 
\end{theorem}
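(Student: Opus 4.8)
The plan is to realise each of the three changes of variables as a weight-changing involution of the configuration space. For a fixed even reference configuration $\eta$, the map $\pi\mapsto\pi\symdif\eta$ will multiply the partition function by an explicit global constant while reciprocating two of the parameters; since the \emph{same} constant arises for both the numerator $Z_{G,e\lra f}$ and the denominator $Z_G(P)$ of $M_n(e,f)$, it cancels in the ratio and invariance follows.

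Work on the half-edge graph $AG$ of $\HH_n$. For the change $(\al,\be,\g)\mapsto(\al,\be^{-1},\g^{-1})$, let $\eta\subseteq AE$ be the set of all half-edges of types $\tb$ and $\tc$. The first step is to verify that $\eta$ has even degree at every vertex of $AG$: at an original vertex the trivalence of $\HH_n$ together with the toroidal boundary condition (so that the vertex meets exactly one edge of each type) gives degree $2$, while at the midpoint of any edge either both half-edges or neither lie in $\eta$, giving degree $0$ or $2$. Hence $\eta$ is an even subgraph, i.e.\ $\eta\in\Pip(AG)$.

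Because $\eta$ is even everywhere, $\pi\mapsto\pi\symdif\eta$ preserves the degree parity at every vertex; it is therefore an involution, hence a bijection, carrying $\Pip(AG)$ onto itself and $\Pief$ onto itself (condition (ii) is exactly the requirement that the two distinguished midpoints have odd degree, which is preserved, as is the evenness required by (i)). The second step is the weight count. As $\eta$ contains no $\ta$-half-edge and every $\tb$- and $\tc$-half-edge, the operation $\pi\mapsto\pi\symdif\eta$ fixes the number of open $\ta$-half-edges and sends the numbers of open $\tb$- and $\tc$-half-edges to $N_\tb-|\pi(\tb)|$ and $N_\tc-|\pi(\tc)|$, where $N_s$ is the total number of $s$-half-edges of $AG$. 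Consequently $w(\pi\symdif\eta)$ at $(\eps_\ta,\eps_\tb,\eps_\tc)$ equals $\eps_\tb^{N_\tb}\eps_\tc^{N_\tc}$ times $w(\pi)$ at $(\eps_\ta,\eps_\tb^{-1},\eps_\tc^{-1})$, that is, at the reciprocated squared parameters $(\al,\be^{-1},\g^{-1})$.

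Summing over $\pi$ and invoking the bijection yields $Z_G^{(\al,\be,\g)}=\eps_\tb^{N_\tb}\eps_\tc^{N_\tc}\,Z_G^{(\al,\be^{-1},\g^{-1})}$, and the identical identity with $Z_{G,e\lra f}$ in place of $Z_G$; dividing cancels the common factor and gives $M_n^{\al,\be,\g}(e,f)=M_n^{\al,\be^{-1},\g^{-1}}(e,f)$. The other two symmetries follow word for word on taking $\eta$ to be the set of all half-edges of types $\{\ta,\tc\}$, respectively $\{\ta,\tb\}$. I expect no serious obstacle: the delicate points are merely to confirm that the global factor is independent of the configuration (so that it is common to numerator and denominator), and to note that $\eps_s\mapsto\eps_s^{-1}$ preserves the sign of $\eps_s$, so the reciprocation is unambiguous at the level of $\al,\be,\g$. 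The single genuine use of hypothesis is the evenness of $\eta$, which would fail on a graph with boundary and is precisely what the toroidal structure of $\HH_n$ supplies.
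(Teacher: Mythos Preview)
Your proof is correct and follows essentially the same approach as the paper: both exploit the involution obtained by taking the symmetric difference with the set $\eta$ of all half-edges of two of the three types, observe that $\eta$ induces an even subgraph of $A\HH_n$ (so the involution preserves both $\Pip$ and $\Pief$), and deduce that numerator and denominator of $M_n(e,f)$ pick up the same global factor $\eps_\tb^{N_\tb}\eps_\tc^{N_\tc}$. Your verification that $\eta$ is even at both original vertices and midpoints is slightly more explicit than the paper's, but the argument is otherwise identical.
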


\begin{proof}
Let $\pi\in\Pip$ be a polygon configuration on $A\HH_n$, 
and let $\pi'$ be  obtained from $\pi$ by 
\begin{equation}\label{eq:msym}
\pi'(e) = \begin{cases} \pi(e) &\text{if  $e$ has type $a$},\\
1-\pi(e) &\text{otherwise}.
\end{cases}
\end{equation}
Since $\pi'$ is obtained from $\pi$ by adding, modulo $2$, a collection of edges
that induce an even subgraph of $\HH_n$,  we have that $\pi'\in\Pip$.
Let $w^{\al,\be,\g}(\pi)$  be the weight of $\pi$ as in \eqref{eq:wt2},
with $\al$, $\be$, $\g$ given by \eqref{sabc}. Then
\begin{equation}
w^{\al,\be,\g}(\pi)=(\be^{\# b}\g^{\# c}) w^{\al,\be^{-1},\g^{-1}}(\pi'),
\label{rl}
\end{equation}
where $\# s$ is the number of $s$-type edges in $\HH_n$.
Similarly, if $e\ne f$, then $\pi'\in\Pief$ and \eqref{rl} holds.

By \eqref{eq:msym}--\eqref{rl}, $M_n$ is 
unchanged under the map $(\al,\be,\g)\mapsto (\al,\be^{-1},\g^{-1})$.
The same proof is valid for the other two cases.
\end{proof}

\begin{remark}\label{rem:sym2}
Recall Remark \ref{rem:isi}, where it is noted that the polymer model is the high temperature 
expansion of a solvable Ising model when $(\al,\be,\g)\in(0,1)^3$. 
By Theorem \ref{thm:sym}, this results in a fairly complete picture of the behaviour of
$\lim_{n\to\oo} M_n(e,f)$ when either none or exactly two of the three parameters
lie in $(1,\oo)$. In contrast, the dimer-based methods of the current work permit
an analysis for all triples $(\al,\be,\g)\in(0,\oo)^3$, but at the price of assuming that $e$, $f$
satisfy the conditions of the forthcoming Theorem \ref{ptp}. 
\end{remark}

\subsection{Main result}\label{sec:main}

Let $e=\langle x,y\rangle$ denote the edge $e\in \EE$ of the hexagonal lattice
$\HH=(\VV,\EE)$ with endpoints $x$, $y$. 
We shall make use of a measure of distance $|e-f|$ between edges $e$ and $f$ which, 
for definiteness,  we take to be the Euclidean distance between the midpoints
of $e$ and $f$, with $\HH$ embedded in $\RR^2$ in the manner of Figure
\ref{fig:hex0} with unit edge-lengths.
Our main theorem is as follows.

\begin{theorem}\label{ptp}
Let $e,f \in \EE$ be NW edges such that: 
\begin{equation}\label{eq:condition0}
\begin{aligned}
&\text{there exists a path $\pa =\pa(e,f)$ of $A\HH_n$ from $Me$ to $Mf$,}\\
& \text{using only horizontal and NW half-edges}.
\end{aligned}
\end{equation}
Let $\eps_s \ne 0$ for $s=a,b,c$, so that $\al,\be,\g>0$, and let
\begin{equation}\label{eq:gamma}
\g_1=\left|\frac{1-\al\be}{\al+\be}\right|,
\qq \g_2=\left|\frac{1+\al\be}{\al-\be}\right|,
\end{equation}
where $\g_2$ is interpreted as $\oo$ if $\al=\be$.

\begin{letlist}
\item The limit $M(e,f)^2=\lim_{n\to\oo}M_n(e,f)^2$ exists for $\g\ne\g_1,\g_2$.

\item \emph{Supercritical case.}
Let $\Rsup$ be the set of all $(\al,\be,\g)\in(0,\oo)^3$ satisfying 
$$
 \left|\frac{1-\al\be}{\al+\be}\right|<\g<\left|\frac{1+\al\be}{\al-\be}\right|.
$$
The limit $\La(\al,\be,\g):=\lim_{|e-f|\to\oo} M(e,f)^2$ exists on
$\Rsup$, and
satisfies $\La > 0$  except possibly on some nowhere dense subset.

\item \emph{Subcritical case.} Let $\Rsub$ be the set of all $(\al,\be,\g)\in(0,\oo)^3$ satisfying 
$$
\text{either}\q \g<\left|\frac{1-\al\be}{\al+\be}\right| 
\q\text{or}\q \g> \left|\frac{1+\al\be}{\al-\be}\right|.
$$
The limit $\La(\al,\be,\g)$ exists on $\Rsub$ and satisfies $\La=0$.
\end{letlist}
\end{theorem}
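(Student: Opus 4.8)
The plan is to carry out the programme sketched in the introduction: realise $M_n(e,f)^2$ as a ratio of determinants of Kasteleyn matrices, diagonalise these by the $\ZZ^2$-symmetry of $\HH_n$, pass to the limit $n\to\oo$ through a Fourier integral, and extract the large-distance behaviour from Widom's theory of block Toeplitz determinants. First I would use the dimer correspondence of Section~\ref{ssec:12} to rewrite the partition functions $Z_{\HH_n}(P)$ and $Z_{\HH_n,e\lra f}$, up to explicit nonzero prefactors, as counts of dimer configurations on the associated graph. By the theorem of Kasteleyn, Fisher, and Temperley \cite{Kast61,Kast63,F61,TF61}, each count is a Pfaffian of an antisymmetric Kasteleyn matrix; on the torus one needs the usual signed combination of the four Pfaffians corresponding to periodic/antiperiodic boundary conditions around the two cycles. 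Since squaring a Pfaffian yields a determinant, after accounting for the four toroidal terms $M_n(e,f)^2$ is expressed through determinants of the Kasteleyn matrices and of the minors obtained by deleting the rows and columns indexed by $Me$ and $Mf$.

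Next, since $\HH_n$ is the quotient of $\HH$ by $\tau_1^n,\tau_2^n$, each Kasteleyn matrix is block-circulant and its determinant factorises over Fourier variables $(z,w)$ ranging through (shifted) $n$-th roots of unity, the block at $(z,w)$ being the Fourier transform $\wh K(z,w)$ of the fundamental-domain interaction; its determinant is, up to units, the characteristic polynomial $P(z,w)$ computed in Section~\ref{ssec:sc}. Writing the minor ratio through the entries of $\wh K(z,w)^{-1}$, which carry a factor $1/P(z,w)$, the quantity $M_n(e,f)^2$ appears as a finite Riemann sum. As $n\to\oo$ this sum converges to a double integral over the unit torus $\TT^2=\{|z|=|w|=1\}$ provided its integrand is continuous there, that is, provided $P(z,w)\ne0$ on $\TT^2$. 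The explicit form of $P$ shows that $\{P=0\}\cap\TT^2\ne\es$ exactly when $\g\in\{\g_1,\g_2\}$, with $\g_1,\g_2$ as in \eqref{eq:gamma}; hence the limit $M(e,f)^2$ exists for $\g\ne\g_1,\g_2$, which is part~(a).

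For parts~(b) and~(c) I would then send $|e-f|\to\oo$. Condition \eqref{eq:condition0} confines $e$ and $f$ to a common horizontal/NW line, so the separation is effectively one-dimensional and $M(e,f)^2$ can be expressed as the determinant $\det T_N(\phi)$ of a block Toeplitz matrix of size $N$ of order $|e-f|$, whose matrix symbol $\phi$ on the unit circle is assembled from the entries of $\wh K^{-1}$ (equivalently, from $1/P$). For $\g\ne\g_1,\g_2$ the symbol $\phi$ is smooth and everywhere invertible on the circle, so Widom's block generalisation of the strong Szeg\H{o} limit theorem \cite{HW0,HW} applies, and the large-$N$ behaviour of $\det T_N(\phi)$ is governed by the winding number (index) of $\det\phi$. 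This index is locally constant off the thresholds $\g=\g_1,\g_2$, where $\phi$ acquires a zero because $\{P=0\}$ meets $\TT^2$. In the subcritical region $\Rsub$ the index is nonzero (as may be checked against the high-temperature regime of Remark~\ref{rem:isi}, where $\La=0$), forcing $\det T_N(\phi)\to0$ and hence $\La=0$, which is~(c). In the supercritical region $\Rsup$ the index vanishes, so $\det T_N(\phi)$ tends to the Widom constant $E[\phi]$, giving $\La=E[\phi]$, positive except on the nowhere dense set where $E[\phi]$ degenerates, and this is~(b).

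The main obstacle will be the analysis of the block Toeplitz symbol $\phi$ and the verification of Widom's hypotheses: identifying $\phi$ correctly from the two-edge correlation, confirming its smoothness and invertibility (which reduces to $P\ne0$ on $\TT^2$), and, above all, computing the index of $\det\phi$ in each region so as to prove it nonzero on $\Rsub$ and zero on $\Rsup$. The most delicate points are the \emph{strict} positivity of $\La$ on $\Rsup$ --- equivalently, the non-vanishing of the Widom constant $E[\phi]$ off a lower-dimensional locus --- and the bookkeeping for the four toroidal Pfaffians required to assemble the correct determinant ratio before the two limits are taken in the stated order.
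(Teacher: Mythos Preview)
Your overall architecture for part~(a) is correct and essentially that of the paper: dimer correspondence, Kasteleyn Pfaffians with the four toroidal spin structures, convergence of the entries of $K_n^{-1}$ to Fourier integrals over $\TT^2$, and non-intersection of the spectral curve with $\TT^2$ as the condition for the limit. One technical point: the numerator $Z_{n,e\lra f}$ is \emph{not} realised by deleting rows and columns at $Me,Mf$; rather (equation~\eqref{tpc} and what follows), one changes the half-edge weights along the path~$\pa$ from $\eps_s$ to $\eps_s^{-1}$, giving a perturbed Kasteleyn matrix $K_n'$ with $K_n'-K_n$ supported on $V_\pa$. It is the resulting finite-rank perturbation of rank $\sim|e-f|$ that produces the block Toeplitz matrix of growing size; a rank-$2$ minor would not.

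The genuine gap is your mechanism for parts~(b) and~(c). You propose to distinguish the two phases by the winding number of $\det\phi$, but you neither compute it nor give a method, and your ``check against the high-temperature regime'' is circular (you would be inferring the index from the known vanishing of $\La$, not the other way round). The paper proceeds differently and more concretely. From Widom's formula it extracts only the \emph{existence} of $\La$ and its \emph{analyticity} on each connected component of $\{UVST\ne0\}$ (Lemma~\ref{ana}); on a connected open set an analytic function is either identically zero or has nowhere-dense zero set, so it suffices to exhibit one point per component where $\La$ is known. For $\Rsup$, which is connected (Proposition~\ref{prop:uvst}(b)), the paper imports a nonvanishing result for the \ot\ model from \cite{GL6} via Lemma~\ref{lem:12poly}. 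For $\Rsub$, which has \emph{four} connected components (Proposition~\ref{prop:uvst}(c)), the high-temperature Ising argument (Remark~\ref{rem:isi}) handles only $\Rsub^1$; the remaining three are reached using the reciprocal symmetry of Theorem~\ref{thm:sym}. Your proposal omits both the \ot\ input needed for~(b) and the symmetry step needed for~(c), and even a successful index computation would still have to be carried out separately on each of the four subcritical components.
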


The function $\La$ has a singularity when crossing between the subcritical and supercritical regions.
A brief explanation of the regions $\Rsub$ and $\Rsup$ follows. 
It turns out that the process is `critical' if and only if the spectral curve 
(see Section \ref{ssec:sc}) of the corresponding
dimer model intersects the unit torus. 
This occurs if and only the parameter-vector $(\al,\be,\g)$
is a root of the equation $\al\be+\be\g+\g\al=1$
or of any one of the three equations obtained from this equation
by the changes of variables of Theorem \ref{thm:sym}.
See Proposition \ref{prop:uvst}.  

Assumption \eqref{eq:condition0}, as illustrated in Figure \ref{fig:lef2}, is key to
the method of proof, and we present no results in the absence of this condition.
Thus, Theorem \ref{ptp} is not of itself 
a complete picture of the location of critical phenomena.
For the related 1-2 model, certain further information about the limits 
corresponding to Theorem \ref{ptp}(b,\,c) may be derived as described in
\cite{GL6} and \cite{lis}, and we do not explore that here, beyond saying that it includes
information on the rates of convergence, and on correlations unconstrained by condition \eqref{eq:condition0}.

\begin{figure}[htbp]
  \centering
\includegraphics*[scale=1]{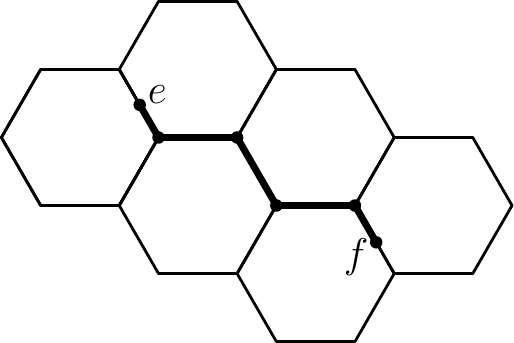}
   \caption{A path $\pa$ of NW and horizontal edges 
   connecting the midpoints of  $e$ and $f$.}\label{fig:lef2}
\end{figure}

By Remark \ref{rem:pm},
it will suffice to prove Theorem \ref{ptp} subject to
the assumption that $\eps_s>0$ for $s=a,b,c$. 

\section{The \ot and dimer models}\label{ssec:12}

We summarize next the relations between the polygon and the \ot and dimer models.

\subsection{The \ot model}\label{sec:12}

A \emph{\ot configuration} on the toroidal graph $\HH_n=(V_n,E_n)$ is
a subset $F\subseteq E_n$ such that every $v \in V_n$ is incident to either one or two members
of $F$. The subset $F$ may be expressed as a vector in the space $\Si_n=\{-1,1\}^{E_n}$
where $-1$ represents an absent edge and $1$ a present edge.
(It will be convenient later to use the space $\Si_n$ rather than the more natural $\Pi_n=\{0,1\}^{E_n}$.)
 Thus the
space of \ot configurations may be viewed as the subset of $\Si_n$ containing
all vectors $\si$ such that
$$
\sum_{e\ni v} \pi_e  \in \{1,2\}, \qq v \in V_n,
$$
where $\pi_e=\tfrac12(1+\si_e)$.

The hexagonal lattice $\HH$ is bipartite, and we colour the two
vertex-classes \emph{black} and \emph{white}.
Let $a,b,c \ge 0$ be such that $(a,b,c)\ne(0,0,0)$, 
and associate these three parameters with the edges 
as  in Figure \ref{fig:hex}.  For $\si\in\Si_n$ and $v \in V_n$, let $\si|_v$ be
the sub-configuration of $\si$ on the three edges incident to $v$,
and assign weights $w(\si|_v)$ to the $\si_v$ as in Figure \ref{fig:sign}.
We observe the states $\si_{e_{v,a}},\si_{e_{v,b}}, \si_{e_{v,c}}$,
where  $e_{v,s}$ is the edge of type $s$ incident to $v$.
The corresponding \emph{signature} is the word $\pi(e_{v,c})\pi(e_{v,b})\pi(e_{v,a})$ of length $3$.
The signature of $v$ is given as in Figure \ref{fig:sign}, together with the local weight
$w(\si|_v)$ associated with each of the six possible signatures. 

\begin{figure}[htbp]
\centerline{\includegraphics*[width=0.7\hsize]{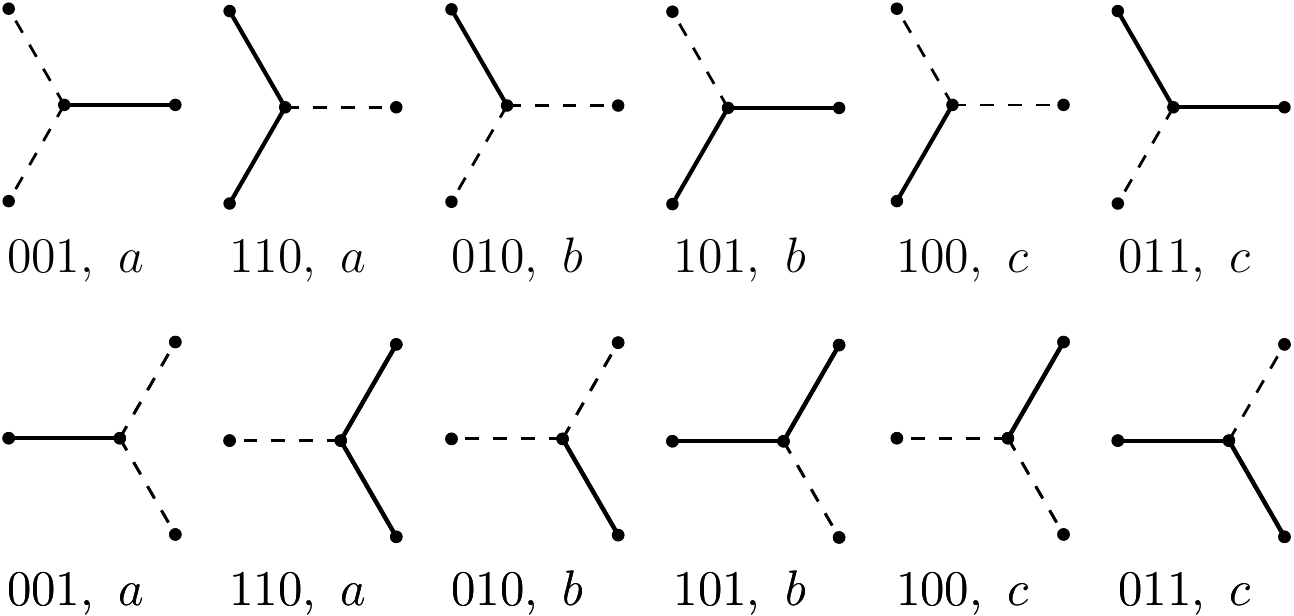}}
   \caption{The six possible local configurations $\si|_v$ at a vertex $v$  in
   the two cases of \emph{black} and \emph{white} vertices of $\HH_n$ 
   (see the upper and lower
   figures, \resp). The signature
   of each is given, and also the local weight $w(\si|_v)$ associated with each 
   local configuration.}
   \label{fig:sign}
\end{figure}

Let 
\begin{equation}\label{eq:pf-2}
w(\si) = \prod_{v\in V} w(\si|_v), \qq \si\in\Si_n,
\end{equation}
and
\begin{equation}\label{eq:pf-1}
Z_n=\sum_{\si\in\Si_n} w(\si).
\end{equation}
This gives rise to the probability measure  
\begin{equation}\label{eq:pm2}
\mu_n(\si) = \frac1{Z_n}  w(\si), \qq\si\in\Si_n.
\end{equation}
We write $\langle X\rangle_n$ for the expectation of the random
variable $X$ with respect to $\mu_n$.

The \ot model was introduced by Schwartz and Bruck \cite{SB08}
in a calculation of the capacity of a certain constrained coding system.
It has been studied by Li \cite{ZL2,ZL3}, and more recently by Grimmett and Li
\cite{GL6}. See \cite{GL7} for a review.

\subsection{The \ot model as a polygon model}\label{ssec:poly}

By \cite[Prop.\ 4.1]{GL6}, the partition function $Z_n$
of the \ot model with parameters $a$, $b$, $c$ on $\HH_n$ satisfies
$$
Z_n=\bigl(\tfrac14(a+b+c)\bigr)^{|V_n|} Z_n',
$$
where
\begin{equation}\label{pft}
Z_n'=\sum_{\si\in\Si}\prod_{v\in V_n}
\bigl(1+A\si_{v,b}\si_{v,c}+B\si_{v,a}\si_{v,c}+C\si_{v,a}\si_{v,b}\bigr),
\end{equation}
$\si_{v,s}$ denotes the state of the $s$-type edge incident to $v\in V$, and 
\begin{equation}
A=\frac{a-b-c}{a+b+c},\q 
B=\frac{b-a-c}{a+b+c},\q 
C=\frac{c-a-b}{a+b+c}.\label{abc}
\end{equation} 

Each $e =\langle u,v\rangle\in E_n$ contributes twice to the product in  \eqref{pft},
in the forms $\si_{u,s}$ and $\si_{v,s}$ for some $s\in\{a,b,c\}$. We write $\si_e$ for this common value,
and we expand \eqref{pft} to obtain a polynomial in the variables $\si_e$. In summing over $\si\in\Si_n$,
a term disappears if it contains some $\si_e$ with odd degree. Therefore,
in each monomial $M(\si)$ of the resulting polynomial, every $\si_e$ has even degree,
that is, degree either $0$ or $2$. With the monomial $M$ we associate
the set $\pi_M$ of edges $e$ for which the degree of $\si_e$ is $2$.
By examination of \eqref{pft} or otherwise, we may see that
$\pi_M$ is a polygon configuration in $\HH _n$, which is to say that
the graph $(V_n,\pi_M)$ comprises vertex-disjoint circuits (that is, closed paths
that revisit no vertex) and isolated vertices. Indeed, there is a one-to-one correspondence between
monomials $M$ and polygon configurations $\pi$. The corresponding polygon partition function 
is given at \eqref{eq:pm}
where the weights $\eps_a$, $\eps_b$, $\eps_\tc$ satisfy
\begin{equation}\label{eq:half1}
\eps_b\eps_\tc=A,\q \eps_a\eps_\tc=B, \q \eps_a\eps_b= C,
\end{equation}
which is to say that
\begin{equation}\label{eq:half2}
\eps_a^2=\frac{BC}{A},\q \eps_b^2=\frac{AC}{B},\q \eps_\tc^2=\frac{AB}{C}.
\end{equation}
Note that these squares may be negative, whence the 
corresponding $\eps_a$, $\eps_b$, $\eps_\tc$ 
are either real or purely imaginary.

The relationship between $\eps_s$ and
the parameters  $a$, $b$, $c$ is given in the following elementary lemma, the proof of which is omitted.

\begin{lemma}
Let $a\geq b\geq c>0$, and let $\eps_s$ be given by 
\eqref{abc}--\eqref{eq:half2}.
\begin{letlist}
\item Let $a<b+c$. Then  $\eps_a,\eps_b,\eps_\tc$ are purely imaginary, and moreover
\begin{romlist}
\item if $a^2<b^2+c^2$, then $0<|\eps_a|<1$, $0<|\eps_b|<1$, $0<|\eps_\tc|<1$,
\item if $a^2=b^2+c^2$, then $|\eps_a|=1$, $0<|\eps_b|<1$, $0<|\eps_\tc|<1$,
\item if $a^2>b^2+c^2$, then $|\eps_a|>1$, $0<|\eps_b|<1$, $0<|\eps_\tc|<1$.
\end{romlist}
\item If $a=b+c$, then $|\eps_a|=\infty$, $\eps_b=\eps_\tc=0$.
\item If $a>b+c$, then $\eps_a,\eps_b,\eps_\tc$ are real,
and moreover $|\eps_a|>1$, $0<|\eps_b|<1$, $0<|\eps_\tc|<1$.
\end{letlist}
\end{lemma}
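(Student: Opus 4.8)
The plan is to reduce the entire statement to the signs and magnitudes of the three quantities $A,B,C$ of \eqref{abc}, read off through the defining formulas \eqref{eq:half2}. Writing $d=a+b+c>0$ and introducing the combinations
\[
p=b+c-a,\qquad q=c+a-b,\qquad r=a+b-c,
\]
the definitions \eqref{abc} read $A=-p/d$, $B=-q/d$, $C=-r/d$, and \eqref{eq:half2} becomes
\[
\eps_a^2=-\frac{qr}{dp},\qquad \eps_b^2=-\frac{pr}{dq},\qquad \eps_\tc^2=-\frac{pq}{dr}.
\]
The hypothesis $a\ge b\ge c>0$ gives $q\ge c>0$ and $r\ge b>0$, so $q,r,d>0$ throughout; the only quantity whose sign varies is $p$, which is positive, zero, or negative exactly when $a<b+c$, $a=b+c$, or $a>b+c$. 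This single observation organizes the whole proof.

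Since $q,r,d>0$, each of $\eps_a^2,\eps_b^2,\eps_\tc^2$ has sign $-\sgn(p)$. Hence in case (a) (where $p>0$) all three squares are negative, so $\eps_a,\eps_b,\eps_\tc$ are purely imaginary, while in case (c) (where $p<0$) all three are positive, so the parameters are real. This disposes of the real/imaginary dichotomy with no computation beyond the sign of $p$.

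For the magnitudes I would invoke the difference-of-squares identities
\[
qr=a^2-(b-c)^2,\q dp=(b+c)^2-a^2,\q pr=b^2-(a-c)^2,\q dq=(a+c)^2-b^2,
\]
and likewise $pq=c^2-(a-b)^2$, $dr=(a+b)^2-c^2$. From $|\eps_a|^2=|qr|/(d|p|)$, comparing with $1$ collapses after cancellation to the comparison of $a^2$ with $b^2+c^2$: in case (a) this is precisely the trichotomy yielding subcases (i)--(iii), while in case (c) one has $a^2>(b+c)^2>b^2+c^2$, forcing $|\eps_a|>1$. The companion computations for $\eps_b$ and $\eps_\tc$ always give $|\eps_b|,|\eps_\tc|<1$, since they reduce to inequalities forced by the ordering, namely $b^2<a^2+c^2$ and $c^2<a^2+b^2$ (from $a\ge b$ and $a\ge c$) together with $(a-c)^2<(a+c)^2$ and $(a-b)^2<(a+b)^2$ (from $b,c>0$). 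Strict positivity $|\eps_b|,|\eps_\tc|>0$ holds because $p\ne0$ in cases (a) and (c) keeps every product nonzero. This settles parts (a) and (c).

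Case (b) is the degenerate boundary $a=b+c$, that is $p=0$ and $A=0$. Here the numerator $qr$ of $\eps_a^2$ is strictly positive while the denominator $dp$ vanishes, to be read as $|\eps_a|=\infty$; and $\eps_b^2=\eps_\tc^2=0$ because their numerators carry the vanishing factor $p$. I expect no conceptual obstacle: the work is entirely in the sign bookkeeping for $p$ across the three regimes, and in correctly interpreting $p=0$ as a limiting infinity. The one genuinely useful device is the set of difference-of-squares factorizations above, which make every magnitude comparison transparent by reducing it to a relation among $a^2$, $b^2$, $c^2$.
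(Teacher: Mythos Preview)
The paper omits the proof of this lemma entirely, calling it elementary, so there is nothing to compare your approach against. Your argument is correct and is exactly the sort of direct verification the authors presumably had in mind: rewriting $A,B,C$ via $p=b+c-a$, $q=c+a-b$, $r=a+b-c$ and observing that under $a\ge b\ge c>0$ only $p$ can change sign is the clean way to organize the case split, and the difference-of-squares identities make all magnitude comparisons immediate.

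One small imprecision: you write that comparing $|\eps_a|^2=qr/(d|p|)$ with $1$ ``collapses to the comparison of $a^2$ with $b^2+c^2$'', and then in case~(c) invoke $a^2>(b+c)^2>b^2+c^2$. In fact the reduction to $a^2$ versus $b^2+c^2$ is valid only in case~(a), where $|p|=p$ and $qr-dp=2(a^2-b^2-c^2)$. In case~(c) one has $|p|=-p$ and the relevant difference is $qr+dp=4bc>0$, which gives $|\eps_a|>1$ directly (and indeed you use the analogous inequalities $(a-c)^2<(a+c)^2$, $(a-b)^2<(a+b)^2$ for $\eps_b,\eps_\tc$). The conclusion is unaffected, but the sentence as written slightly overstates the uniformity of the reduction.
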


Equations \eqref{eq:half1}--\eqref{eq:half2} express the $\eps_g$
in terms of $A$, $B$, $C$. Conversely, for given real $\eps_s\ne 0$,
it will be useful later to define $A$, $B$, $C$ by \eqref{eq:half1},
even when there is no corresponding \ot model.

\subsection{Two-edge correlation in the \ot model}\label{sec:eec}

Consider the \ot model on $\HH_n$ with parameters $a$, $b$, $c$, 
and specifically the two-edge correlation $\langle\si_e\si_f\rangle_n$
where $e,f\in E_n$ are distinct. 

We multiply through \eqref{pft} by $\si_e\si_f$ and expand in monomials.
This amounts to
expanding \eqref{pft} and retaining those monomials $M$
in which every $\si_g$ has even degree 
except $\si_e$ and $\si_f$, which have  degree $1$.
We may associate with $M$
a set $\pi_M$ of half-edges of $A \HH_n$ such that: (i) the midpoints
$M e$ and $M f$ have degree $1$, and (ii) every other vertex in $A V_n$ has even degree.
Such a configuration comprises a set of cycles together with a path between
$Me$ and $Mf$. The next lemma is immediate.

\begin{lemma}\label{lem:12poly}
The two-edge correlation function of the \ot model satisfies
\begin{equation}\label{eq:edgecor20}
\langle\si_e\si_f\rangle_n=\frac{Z_{n,e\lra f}}{Z_n(P)}=M_n(e,f),
\end{equation}
where the numerator $Z_{n,e\lra f}$ is given in \eqref{eq:edgecor3}, and the parameters
of the polygon model satisfy \eqref{eq:half2} and \eqref{abc}.
\end{lemma}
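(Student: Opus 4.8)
The statement is flagged as ``immediate'' because the monomial correspondence needed for the denominator has already been established in Sections \ref{ssec:poly}--\ref{sec:eec}; the plan is to run the identical expansion for the numerator $\sum_\si \si_e\si_f w(\si)$ and observe that the combinatorial bookkeeping is the same apart from the parity requirement at $e$ and $f$. First I would write $\langle\si_e\si_f\rangle_n = \bigl(\sum_\si \si_e\si_f w(\si)\bigr)/Z_n$ and invoke the factorization $Z_n=(\tfrac14(a+b+c))^{|V_n|}Z_n'$ of \cite[Prop.\ 4.1]{GL6}, which applies verbatim to the numerator since $\si_e\si_f$ is inserted \emph{after} the product over vertices and the prefactor is independent of $\si$. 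This common prefactor then cancels, reducing the claim to an identity between the primed numerator $\sum_\si \si_e\si_f\prod_v(1+A\si_{v,b}\si_{v,c}+B\si_{v,a}\si_{v,c}+C\si_{v,a}\si_{v,b})$ and $Z_n'$.

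The core step is to expand the integrand of \eqref{pft} into monomials in the edge-variables and sum over $\si\in\Si_n$, using that $\sum_{\si_g\in\{-1,1\}}\si_g^{d}$ equals $2$ for even $d$ and vanishes for odd $d$. As in Section \ref{ssec:poly}, each vertex factor selects $0$ or $2$ incident edges at $v$, so every monomial encodes a set of open half-edges in which each $v\in V_n$ has even degree. The only change is the extra factor $\si_e\si_f$: it raises the exponent of $\si_e$ and of $\si_f$ by one, so a monomial now survives the summation precisely when every edge-variable other than $\si_e,\si_f$ has even degree while $\si_e$ and $\si_f$ each have degree exactly $1$. Translating to half-edges, this is exactly conditions (i)--(ii) defining $\Pief$: the midpoints $Me$, $Mf$ carry one open half-edge and every other vertex of $A\HH_n$ has even degree. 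I would then match the surviving monomials bijectively with the configurations $\pi\in\Pief$, exactly as the denominator monomials were matched with $\Pip$ (the match is well defined because trivalence forces the two open half-edges at any vertex to have distinct types, matching one of $A$, $B$, $C$).

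It remains to check that the weights agree. Using \eqref{eq:half1} in the form $A=\eps_b\eps_\tc$, $B=\eps_a\eps_\tc$, $C=\eps_a\eps_b$, the vertex factor attached to a pair of open half-edges of types $s,t$ factorizes as $\eps_s\eps_t$; distributing one $\eps$ to each open half-edge shows that a surviving configuration $\pi$ carries coefficient $\eps_a^{|\pi(a)|}\eps_b^{|\pi(b)|}\eps_\tc^{|\pi(c)|}$, which is precisely the summand of \eqref{eq:edgecor3}. Summing over $\si$ multiplies each surviving coefficient by the same factor $2^{|E_n|}$ (each of the $|E_n|$ edge-variables has even exponent in a surviving monomial, contributing a factor $2$), so the primed numerator equals $2^{|E_n|}Z_{n,e\lra f}$. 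The identical computation with the factor $\si_e\si_f$ removed gives $Z_n'=2^{|E_n|}Z_n(P)$, and the $2^{|E_n|}$ cancels in the ratio, yielding $\langle\si_e\si_f\rangle_n=Z_{n,e\lra f}/Z_n(P)=M_n(e,f)$.

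The only place an error could hide is the parity bookkeeping, so that is where I would concentrate: one must confirm that inserting $\si_e\si_f$ flips the surviving-degree condition from ``even'' to ``exactly one'' at $e$ and $f$ and at no other edge, so that the surviving family is exactly $\Pief$ rather than something larger or smaller. A secondary point is to verify that the power of $2$ emerging from the summation is genuinely identical in numerator and denominator; this is automatic here, since the parity flip at $e,f$ is undone by the extra $\si_e\si_f$, leaving every edge-variable with an even exponent in both sums.
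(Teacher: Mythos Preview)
Your proposal is correct and follows essentially the same route as the paper: the paragraph preceding the lemma already explains that one multiplies \eqref{pft} by $\si_e\si_f$, expands into monomials, and retains exactly those in which every $\si_g$ has even degree except for $\si_e$ and $\si_f$ with degree $1$, yielding the configurations of $\Pief$. You have simply filled in the bookkeeping (the cancellation of the prefactor $(\tfrac14(a+b+c))^{|V_n|}$, the factorization of vertex weights via \eqref{eq:half1}, and the matching factors $2^{|E_n|}$) that the paper leaves implicit in declaring the lemma ``immediate.''
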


\subsection{The polygon model as a dimer model}\label{ssec-dimer}

We show next a one-to-one correspondence between polygon configurations on $\HH_n$ 
and dimer configurations on the corresponding \emph{Fisher graph} of $\HH_n$.
The Fisher graph $\FF_n$ is obtained from $\HH_n$ by replacing each vertex by a 
`Fisher triangle' (comprising three `triangular edges'), 
as 
illustrated in Figure \ref{fig:df}.  A \emph{dimer configuration} (or \emph{perfect matching})
is a set $D$ of edges such that each vertex is incident to exactly one edge of $D$.

Let $\pi$ be a polygon configuration on $\HH_n$ (considered as a collection of edges). 
The local configuration of $\pi$ at 
a black vertex $v \in V_n$
is one of the four configurations at the top of Figure \ref{fig:df}, 
and the corresponding local dimer configuration
is given in the lower line (a similar correspondence holds
at white vertices). 
The construction may be expressed as follows. Each edge $e$ of $\FF_n$
is either triangular or is inherited from $\HH_n$
(that is, $e$ is the central third of an edge of $\HH_n$). In the latter case, we
place a dimer on $e$ if and only if $e \notin \pi$. Having applied this rule on the edges inherited from
$\HH_n$, there is a unique allocation of dimers to the triangular edges that results in 
a dimer configuration on $\FF_n$. We write $D=D(\pi)$ for the resulting
dimer configuration, and note that the correspondence $\pi \lra D$ is one-to-one.

\begin{figure}[htbp]
\centerline{\includegraphics*{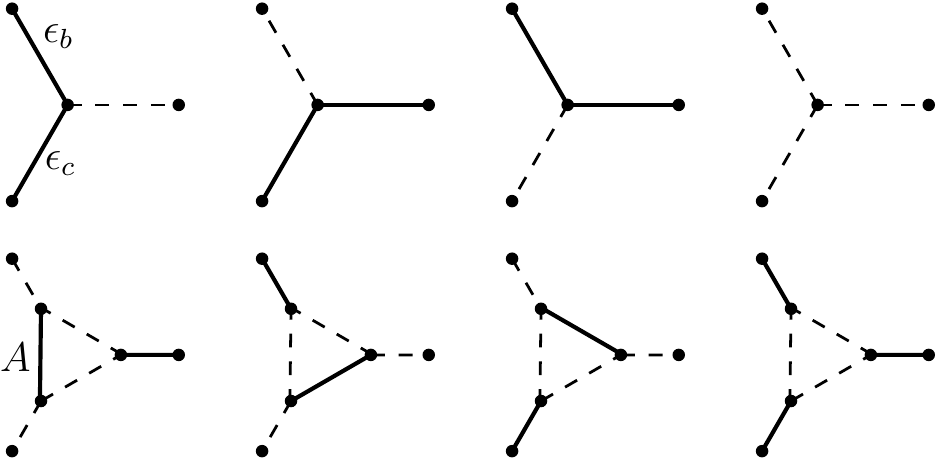}}
   \caption{To each local polygon configuration at a black vertex
   of $\HH_n$, there corresponds
   a dimer configuration on the Fisher graph $\FF_n$. The situation at 
   a white vertex is similar. In the leftmost configuration,
   the local weight of the polygon configuration  is $\eps_b\eps_\tc$,
   and in the dimer configuration $A$.}\label{fig:df}
\end{figure}

By \eqref{eq:polywt}, the weight $w(\pi)$ is the product (over $v\in V_n$) of a local weight
at $v$ belonging to the set $\{\eps_a\eps_b, \eps_b \eps_\tc, \eps_\tc\eps_a, 1\}$,
where the particular value depends on the behavior of $\pi$ at $v$ 
(see Figure \ref{fig:df} for an illustration of the four possibilities at a  black vertex).
We now assign weights to the edges of the Fisher graph $\FF_n$ in such a way that
the corresponding dimer configuration has the same weight as $\pi$.

Each edge of a Fisher triangle has one of the types:
vertical (denoted `$\v$'),  NE, 
or NW, according to its orientation. To each edge $e$ of $\FF_n$
lying in a Fisher triangle, we allocate the weight:
\begin{align*}
A\q &\text{if $e$ is vertical},\\
B\q &\text{if $e$ is NE},\\
C\q &\text{if $e$ is NW},
\end{align*}
where $A$, $B$, $C$ satisfy \eqref{eq:half1}--\eqref{eq:half2}.
The dimer partition function is given by
\begin{equation}\label{eq:pf6}
Z_n(D) := \sum_D A^{|D(\v)|} B^{|D(\nea)|} C^{|D(\nw)|},
\end{equation}
where $D(s)\subseteq D$ is the set of dimers of type $s$. 
It is immediate, by inspection of Figure \ref{fig:df}, that the correspondence $\pi \lra D$ is 
weight-preserving, and hence
$$
Z_n(D) = Z_n(P).
$$

\subsection{The spectral curve of the dimer model}\label{ssec:sc}

We turn now to the spectral curve of the weighted dimer model on $\FF_n$, 
for the background to which the reader
is referred to \cite{ZL-sc}.
The fundamental domain of $\FF_n$ is drawn in Figure \ref{11fd}, and the edges
of $\FF_n$ are oriented as in that figure. It is easily checked that this
orientation is \lq clockwise odd', in the sense that any face of $\HH_n$, 
when traversed clockwise, contains
an odd number of edges oriented in the corresponding direction.
The fundamental domain has $6$ vertices labelled $1, 2, \dots, 6$, and its weighted adjacency matrix
(or `Kasteleyn matrix') is the  $6\times 6$ matrix $W=(k_{i,j})$ with
$$
k_{i,j} = \begin{cases} w_{i,j} &\text{if $\langle i,j\rangle$ is oriented from $i$ to $j$},\\
 -w_{i,j} &\text{if $\langle i,j\rangle$ is oriented from $j$ to $i$},\\
 0 &\text{if there is no edge between $i$ and $j$},
\end{cases}
$$
where the $w_{i,j}$ are as indicated in Figure \ref{11fd}.
From $W$ we obtain a \emph{modified} adjacency (or `modified Kasteleyn')
matrix $K$ as follows.

\begin{figure}[htbp]
\centerline{\includegraphics*[scale=1]{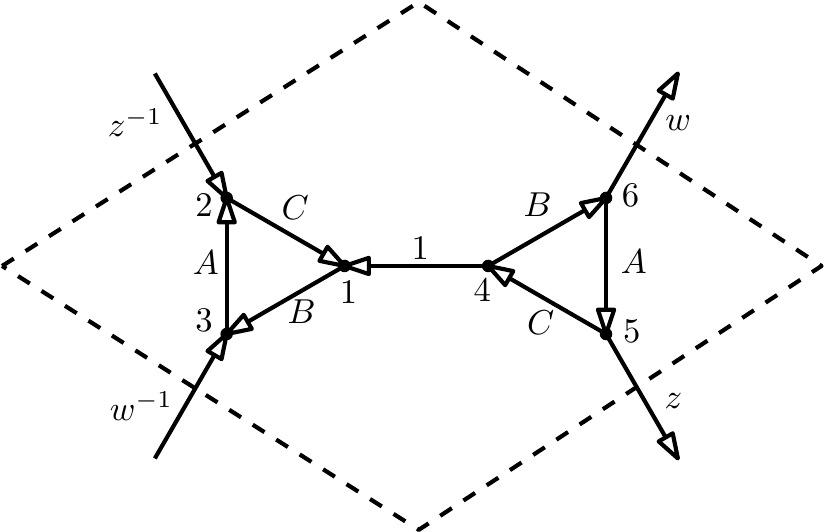}}
   \caption{Weighted $1\times1$ fundamental domain of $\FF_n$. The vertices are labelled
   $1,2,\dots,6$, and the weights $w_{i,j}$ and orientations are as indicated.
   The further weights $w^{\pm 1}$, $z^{\pm 1}$ are as indicated.}\label{11fd}
\end{figure}

We may consider the graph of Figure \ref{11fd} as being embedded in a torus, that is, 
we identify the upper left boundary and the lower right boundary, and also the 
upper right boundary and the lower left boundary, as illustrated in the figure by dashed lines. 

Let $z,w \in\CC$ be non-zero.
We orient each of the four boundaries of Figure \ref{11fd} (denoted by dashed lines) from their
lower endpoint to their upper endpoint.  The `left' and `right' of an oriented 
portion of a boundary are as viewed by a person traversing in the given direction.

Each edge $\langle u,v\rangle$ crossing a  boundary corresponds to two entries in the 
weighted adjacency matrix, indexed $(u,v)$ and $(v,u)$. If the edge starting from $u$ 
and ending at $v$ crosses an  upper-left/lower-right boundary from  left to  right 
(\resp, from right to left), we modify the adjacency matrix by multiplying the entry 
$(u,v)$ by $z$ (\resp, $z^{-1}$). If the edge starting from $u$ and ending at $v$ 
crosses an upper-right/lower-left  boundary from  left to  right (\resp, 
from  right to  left), in the modified adjacency matrix, we multiply the entry by 
$w$ (\resp, $w^{-1}$). We modify the entry $(v,u)$ in the same way.
For a definitive interpretation of Figure \ref{11fd}, the reader
is referred to the matrix following.
 
The signs of these weights are chosen to reflect the orientations of the edges.
The resulting \emph{modified adjacency matrix} (or `modified Kasteleyn matrix')  is
$$
K = \left(\begin{array}{cccccc} 0&
-C&  B&   -1& 0& 0\\C&    0&   -A&    0& -z^{-1}& 0\\
-B&  A&
0& 0& 0 &-w^{-1}\\     1&    0&    0&    0&   -C&  B\\
0& z& 0& C& 0&   -A\\     0&    0&    w&   -B&  A&
0\end{array}\right).
$$

The \emph{characteristic polynomial} is given (using Mathematica or otherwise) by
\begin{align} \label{pzw}
P(z,w)&:=\det K\\
&\phantom{:}=
1+A^4+B^4+C^4+(A^2C^2-B^2)\left(w+\frac 1w\right)\nonumber\\
&\hskip1cm +(A^2B^2-C^2)\left(z+\frac 1z\right)
+(B^2C^2-A^2)\left(\frac wz + \frac zw\right).
\nonumber
\end{align}
By \eqref{eq:half1} and \eqref{sabc},
 \begin{align*}
 P(z,w)&=1+\al^2\be^2+\al^2\g^2+
 \be^2\g^2+\al\g(\be^2-1)\left(w+\frac{1}{w}\right)
 \\
 &\hskip1cm +\al\be(\g^2-1)\left(z+\frac{1}{z}\right)
 +\be\g(\al^2-1)\left(\frac wz+\frac zw\right).
 \end{align*}

The \emph{spectral curve} is the zero locus of the characteristic polynomial, that
is, the set of roots of $P(z,w)=0$. It will be useful later
to identify the intersection of the spectral curve with the
unit torus $\TT^2=\{(z,w): |z|=|w|=1\}$.

Let
\begin{equation}\label{u}
\begin{aligned}
U&=\al\be+\be\g+\g\al-1,\\
V&=-\al\be+\be\g+\g\al+1,\\
S&=\al\be-\be\g+\g\al+1,\\
T&=\al\be+\be\g-\g\al+1.
\end{aligned}
\end{equation}

\begin{proposition}\label{prop:-1}
Let  $\eps_a,\eps_b,\eps_\tc\ne 0$, so that $\al,\be,\g>0$. 
Either the spectral curve does not intersect the unit torus $\TT^2$,  
or the intersection is a single real 
point of multiplicity $2$. 
Moreover, the spectral curve intersects $\TT^2$ 
at a single real 
point if and only if $UVST=0$, where $U,V,S,T$ are given by \eqref{u}.
\end{proposition}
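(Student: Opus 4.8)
The plan is to move everything onto the unit torus, where $P$ becomes a real-valued trigonometric function, and to locate its zeros explicitly. Writing $z=e^{i\theta}$ and $w=e^{i\phi}$, the substitution $z+z^{-1}=2\cos\theta$, $w+w^{-1}=2\cos\phi$, $w/z+z/w=2\cos(\phi-\theta)$ turns the formula for $P$ (the one following \eqref{pzw}) into
\[
P=c_0+2p\cos\phi+2q\cos\theta+2r\cos(\phi-\theta),
\]
where $c_0=1+\al^2\be^2+\al^2\g^2+\be^2\g^2$, $p=\al\g(\be^2-1)$, $q=\al\be(\g^2-1)$, $r=\be\g(\al^2-1)$. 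The first concrete step is to evaluate $P$ at the four half-period points $(\pm1,\pm1)$; a short calculation, matching against \eqref{u}, gives $P(1,1)=U^2$, $P(1,-1)=V^2$, $P(-1,1)=T^2$, $P(-1,-1)=S^2$. This already yields the easy implication: if $UVST=0$ then one of these corners is a zero of $P$, so the spectral curve meets $\TT^2$. It also identifies the only candidate contact points.

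For the converse I would prove that $P\ge0$ on all of $\TT^2$, so that any intersection is necessarily an isolated minimum rather than a crossing curve. Fixing $\theta$ and minimizing over $\phi$, I write $P=(c_0+2q\cos\theta)+2(p+r\cos\theta)\cos\phi+2r\sin\theta\sin\phi$; the amplitude of the $\phi$-dependent part squared collapses, since $(p+r\cos\theta)^2+r^2\sin^2\theta=p^2+r^2+2pr\cos\theta$, so that, with $u:=\cos\theta\in[-1,1]$,
\[
\min_\phi P = (c_0+2qu)-2\sqrt{p^2+r^2+2pru}.
\]
I first check the linear factor $c_0+2qu$ is nonnegative on $[-1,1]$: by AM--GM, $c_0=(1+\al^2\be^2)+(\al^2\g^2+\be^2\g^2)\ge 2\al\be+2\al\be\g^2=2\al\be(1+\g^2)\ge2\al\be|\g^2-1|=2|q|$. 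Since both sides of $\min_\phi P\ge0$ are then nonnegative, positivity of $P$ is equivalent to nonnegativity of the quadratic
\[
Q(u):=(c_0+2qu)^2-4\bigl(p^2+r^2+2pru\bigr), \qquad u\in[-1,1],
\]
whose leading coefficient is $4q^2\ge0$ and which satisfies, by the corner computation, $Q(1)=(UV)^2$ and $Q(-1)=(ST)^2$.

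The main obstacle is precisely showing $Q\ge0$ on $[-1,1]$: an upward parabola that is nonnegative at both endpoints can still dip below zero if it has two distinct roots inside $(-1,1)$, so endpoint data plus convexity is not enough. I would rule this out by analysing the vertex $u^\ast=(2pr-c_0q)/(2q^2)$ together with the discriminant $\tfrac1{64}\Delta_Q=p^2q^2+q^2r^2+r^2p^2-c_0\,pqr$, and proving the implication $u^\ast\in(-1,1)\Rightarrow\Delta_Q\le0$; this is an explicit polynomial inequality in $\al,\be,\g$ that I expect to handle by AM--GM bounds of the type used above, after splitting into cases according to the signs of $\al-1$, $\be-1$, $\g-1$ (which fix the signs of $p,q,r$). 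Granting this, $Q\ge0$ on $[-1,1]$, hence $P\ge0$ on $\TT^2$, and $\min_{\TT^2}P=0$ exactly when $Q$ vanishes somewhere on $[-1,1]$. Since the interior of $[-1,1]$ is excluded by the vertex/discriminant bound, $Q$ can vanish only at $u=\pm1$, i.e.\ when one of $(UV)^2,(ST)^2$ is zero, which is to say $UVST=0$. This completes the equivalence.

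Finally, for the \emph{single real point of multiplicity $2$}, I would note that at most one of $U,V,S,T$ can vanish for $\al,\be,\g>0$: pairwise sums such as $U+V=2\g(\al+\be)$, $U+S=2(\al\be+\g\al)$, $V+S=2(\g\al+1)$, etc.\ are all strictly positive, so no two of the four can be zero simultaneously. Hence the contact occurs at a single corner, say $(1,1)$ when $U=0$. Expanding $P$ to second order there gives $P\approx-\bigl(p\phi^2+q\theta^2+r(\phi-\theta)^2\bigr)$; I would verify this quadratic form is negative definite (equivalently $pq+qr+rp>0$ and $p+q+2r<0$, which follow from $\al\be+\be\g+\g\al=1$), so that $(1,1)$ is a nondegenerate (Morse) minimum. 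The zero is then isolated on $\TT^2$ and the complexified curve $\{P=0\}$ meets $\TT^2$ there with multiplicity $2$, as claimed. The cases $V=S=T=0$ are identical after applying the reciprocal symmetries of Theorem \ref{thm:sym}.
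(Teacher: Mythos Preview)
Your approach differs from the paper's. The paper does not argue $P\ge0$ on $\TT^2$ directly: it defers both this inequality and the structure of the zero set to the methods of \cite[Thm~9]{ZLver} and \cite[Lemma~3.2]{ZL2} (spectral-curve arguments for dimer models), and then simply evaluates $P$ at the four half-period points $(\pm1,\pm1)$ and notes, as you do correctly via the pairwise sums, that at most one of $U,V,S,T$ can vanish. Your reduction to the one-variable quadratic $Q(u)=(c_0+2qu)^2-4(p^2+r^2+2pru)$ on $[-1,1]$ is a more elementary, self-contained route, and the corner identifications $Q(1)=(UV)^2$, $Q(-1)=(ST)^2$ are right.

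There are, however, two genuine gaps. First and most seriously, the inequality $Q\ge0$ on $[-1,1]$ is exactly the content of $P\ge0$ on $\TT^2$, and you have not proved it: your plan---show $u^\ast\in(-1,1)\Rightarrow\Delta_Q\le0$ via a sign split on $\al-1,\be-1,\g-1$ and AM--GM---is only a sketch, and you explicitly write ``granting this''. Note that when $pqr\le0$ one has $\Delta_Q=64\bigl(p^2q^2+q^2r^2+r^2p^2-c_0\,pqr\bigr)\ge0$ automatically, so in those sign cases your implication fails and you must instead establish $u^\ast\notin(-1,1)$, a different inequality not obviously amenable to the same bounds. You also omit the degenerate case $q=0$ (that is, $\g=1$), where $Q$ is affine rather than quadratic. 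Second, for the ``multiplicity~$2$'' conclusion you assert that the Hessian conditions $pq+qr+rp>0$ and $p+q+2r<0$ follow from $U=0$, but give no derivation; semidefiniteness of the Hessian would come for free once $P\ge0$ is known, but strict definiteness---needed for the nondegenerate tangency---does not, and requires its own calculation (the appeal to Theorem~\ref{thm:sym} for the other corners likewise needs the verification that the reciprocal symmetries transport the Hessian computation, since that theorem is stated for $M_n$, not for $P$).
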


\begin{proof}
The proof follows by a computation similar to those of the proofs of
\cite[Thm 9]{ZLver} and \cite[Lemma 3.2]{ZL2}.
A number of details of the proof are very close to those of
\cite{ZLver,ZL2} and are omitted. Instead, we highlight where differences arise. 

Let $\epsilon_a,\epsilon_b,\epsilon_c>0$. 
 By \eqref{sabc} and \eqref{eq:half1}--\eqref{eq:half2}, the map 
 $\psi:(A,B,C)\mapsto(\al,\be,\g)$  is a bijection between $(0,\oo)^3$ and itself. 
That $P(z,w)\geq 0$, 
 for  $(z,w)\in\TT^2$ and $(\alpha,\beta,\gamma)\in(0,\oo)^3$, 
follows by the corresponding argument 
in the proofs of \cite[Thm 9]{ZLver} and \cite[Lemma 3.2]{ZL2}.
It holds in the same way that the intersection of $P(z,w)=0$ with 
 $\TT^2$ can only be a single point of multiplicity $2$.
  
We turn now to the four points when $z,w=\pm 1$.  Note that 
 \begin{align*}
 P(1,1)&=(-1+A^2+B^2+C^2)^2=U^2,\\ 
 P(-1,-1)&=(1-A^2+B^2+C^2)^2=S^2,\\
 P(-1,1)&=(1+A^2-B^2+C^2)^2=T^2,\\
 P(1,-1)&=(1+A^2+B^2-C^2)^2=V^2,
 \end{align*}
 by \eqref{eq:half1}. Since $A,B,C\ne 0$,
 no more than one of the above four quantities can equal zero.
\end{proof}

The condition $UVST\ne 0$ may be understood as follows.
Let $\g_i$ be given by \eqref{eq:gamma}, and note
that
\begin{equation}\label{eq:gamma=}
\g_2(\al^{-1},\be)=1/\g_1(\al,\be).
\end{equation}

\begin{proposition}\label{prop:uvst}
Let $\al,\be,\g>0$ and let $U,V,S,T$ satisfy \eqref{u}.
\begin{letlist}

\item We have that $UVST=0$ if and only if $\g\in\{\g_1,\g_2\}$.

\item The region $\Rsup$ of Theorem \ref{ptp}
is an open, connected subset of $(0,\oo)^3$.

\item The region $\Rsub$ is the disjoint union of four
open, connected subsets of $(0,\oo)^3$, namely,
\begin{equation}
\begin{alignedat}{2}
\Rsub^1&=\{\g<\g_1\}\cap\{\al\be<1\}, \q 
&&\Rsub^2=\{\g<\g_1\}\cap\{\al\be>1\},\\
\Rsub^3&=\{\g>\g_2\}\cap\{\al<\be\}, \q 
&&\Rsub^4=\{\g>\g_2\}\cap\{\al>\be\}.
\end{alignedat}\label{eq:rdef}
\end{equation}
\end{letlist}
\end{proposition}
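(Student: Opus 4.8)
The plan is to reduce each of the three parts to explicit algebraic manipulations of the quantities $U,V,S,T$ defined in \eqref{u}, exploiting the fact that each is linear in $\g$. For part (a), I would first factor $UVST$ as a polynomial in $\g$. Observe that $U=\g(\al+\be)+(\al\be-1)$ and $V=\g(\al+\be)-(\al\be-1)$, so that $UV=\g^2(\al+\be)^2-(\al\be-1)^2=(\al+\be)^2\bigl(\g^2-\g_1^2\bigr)$, using the definition $\g_1=|1-\al\be|/(\al+\be)$. Similarly $S=\g(\al-\be)+(\al\be+1)$ and $T=-\g(\al-\be)+(\al\be+1)$, whence $ST=(\al\be+1)^2-\g^2(\al-\be)^2=(\al-\be)^2\bigl(\g_2^2-\g^2\bigr)$ when $\al\ne\be$ (and $ST=(\al\be+1)^2>0$ when $\al=\be$, consistent with $\g_2=\oo$). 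Therefore
\[
UVST=(\al+\be)^2(\al-\be)^2\,(\g^2-\g_1^2)(\g_2^2-\g^2),
\]
and since $\al,\be>0$ the prefactor is nonzero, so $UVST=0$ if and only if $\g^2\in\{\g_1^2,\g_2^2\}$, that is, $\g\in\{\g_1,\g_2\}$ (both being nonnegative). This settles (a).

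For part (b), the region $\Rsup=\{\g_1<\g<\g_2\}$ is, for each fixed $(\al,\be)$, an open interval in the $\g$-direction (empty when $\g_1\ge\g_2$), and I would check openness by noting that $\g_1,\g_2$ are continuous functions of $(\al,\be)$ on $(0,\oo)^2$ (with $\g_2$ continuous into $(0,\oo]$). Connectedness is the substantive point: I would exhibit $\Rsup$ as the region lying strictly between the two graphs $\g=\g_1(\al,\be)$ and $\g=\g_2(\al,\be)$ over the open base $\{(\al,\be):\g_1<\g_2\}$, and argue that this base is itself connected and that the fibrewise-interval structure then gives a connected total space (e.g.\ by contracting each fibre to its midpoint, exhibiting a deformation retraction onto the connected base, or simply verifying path-connectedness directly by first moving within a fibre, then along a path in the base). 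The relation \eqref{eq:gamma=} is a useful symmetry here but is not strictly needed for connectedness.

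For part (c), the four sets in \eqref{eq:rdef} are manifestly open (finite intersections of the open conditions $\g<\g_1$, $\g>\g_2$, $\al\be<1$, $\al\be>1$, $\al<\be$, $\al>\be$), and each is connected by a fibrewise argument as in (b). The main work is to verify that $\Rsub$ is exactly their disjoint union. Disjointness is clear since $\{\g<\g_1\}$ and $\{\g>\g_2\}$ are disjoint (as $\g_1\le\g_2$ wherever both are relevant), and the sub-splitting conditions $\al\be\lessgtr1$ and $\al\lessgtr\be$ are mutually exclusive; I would only need to rule out the boundary cases $\al\be=1$ on $\{\g<\g_1\}$ and $\al=\be$ on $\{\g>\g_2\}$, which is where I expect the one genuine subtlety to lie. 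When $\al\be=1$ we have $\g_1=0$, so $\{\g<\g_1\}$ is empty and contributes nothing; when $\al=\be$ we have $\g_2=\oo$, so $\{\g>\g_2\}$ is empty. Thus the excluded boundary loci carry no points of $\Rsub$, and $\Rsub=\Rsub^1\sqcup\Rsub^2\sqcup\Rsub^3\sqcup\Rsub^4$. The hardest part is precisely this bookkeeping at the degenerate parameter values $\al\be=1$ and $\al=\be$, where $\g_1$ or $\g_2$ hits the boundary of $(0,\oo)$; once one checks that these are exactly the loci on which the corresponding piece of $\Rsub$ degenerates to the empty set, the decomposition and the connectedness of each piece follow routinely.
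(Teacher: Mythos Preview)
Your proposal is correct and matches the paper's approach: the explicit factorization $UV=(\al+\be)^2(\g^2-\g_1^2)$ and $ST=(\al-\be)^2(\g_2^2-\g^2)$ for (a), and the degeneracies $\g_1=0$ at $\al\be=1$, $\g_2=\oo$ at $\al=\be$ for (c), are precisely the ingredients the paper's three-sentence proof invokes. The one point you leave slightly open is that $\g_1<\g_2$ for \emph{all} $\al,\be>0$ (the paper simply asserts this for (b)); once you check it, your ``base'' is all of $(0,\oo)^2$ and the fibrewise connectedness argument goes through immediately.
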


\begin{proof}
Part (a) follows by an elementary manipulation of \eqref{u}. Part (b) holds since
$\g_1<\g_2$ for all $\al,\be >0$. Part (c) is a consequence of the facts that
$\g_1=0$ when $\al+\be=0$, and $\g_2=\oo$ when $\al-\be=0$.
\end{proof}

\section{Proof of Theorem \ref{ptp}}\label{sec:proof}

\subsection{Outline of proof}
We begin with an outline of the main steps of the proof. In Section \ref{ssec:op},
the two-edge correlation of the polymer model on $\HH_n$ is expressed as the ratio of
expressions involving Pfaffians of modified adjacency matrices of
dimer models on the graph of Section \ref{ssec-dimer}. The squares of such Pfaffian ratios
are shown in Section \ref{ssec:pf} to converge to a certain determinant.
This implies the existence of the limit $M(e,f)^2$ of the two-edge correlation,
and completes the proof of part (a) of the theorem. For parts (b) and (c), one
considers the square $M(e,f)^2$ as the determinant of a block Toeplitz matrix.
By standard facts about Toeplitz matrices, the limit
$\La:=\lim_{|e-f|\to\oo} M(e,f)^2$ exists and is analytic except when the spectral
curve intersects the unit torus. The remaining claims follow by Proposition \ref{prop:-1}.

\subsection{The order parameter in terms of Pfaffians}\label{ssec:op}
By Remark \ref{rem:pm}, we shall assume without
loss of generality that  $\eps_a,\eps_b,\eps_\tc>0$. 
Let $\pa$ be the path of $A\HH_n$ connecting $Me$ and 
$Mf$ as in \eqref{eq:condition0}. To a  configuration $\pi\in\Pief$
we associate the configuration $\pi':= \pi+ \pa \in \Pip$
(with addition modulo $2$). The correspondence $\pi \lra \pi'$
is one-to-one between $\Pief$ and $\Pip$.
By considering the configurations contributing to
$Z_{n,e\lra f}$, we obtain by Lemma \ref{lem:12poly} that
\begin{equation}
M_n(e,f) = \frac{Z_{n,e\lra f}}{Z_n(P)}=
\Biggl(\prod_{g\in \pa}\eps_g\Biggr)\frac{Z_{n,\pa}(P)}{Z_n(P)},
\label{tpc}
\end{equation}
where $Z_{n,\pa}(P)$ is the partition function of polygon configurations on $A\HH_n$ 
with the weights of $s$-type half-edges along $\pa$ changed from $\eps_s$ to $\eps_s^{-1}$. 

From the Fisher graph $\FF_n$, we construct an \emph{augmented Fisher graph} $A\FF_n$ by 
placing two further vertices on each non-triangular edge of $\FF_n$, see Figure \ref{fig:AFn}.
We will construct a weight-preserving correspondence between polygon configurations on 
$A\HH _n$ and dimer configurations on $A\FF_n$. 

\begin{figure}[htbp]
  \centering
\includegraphics*{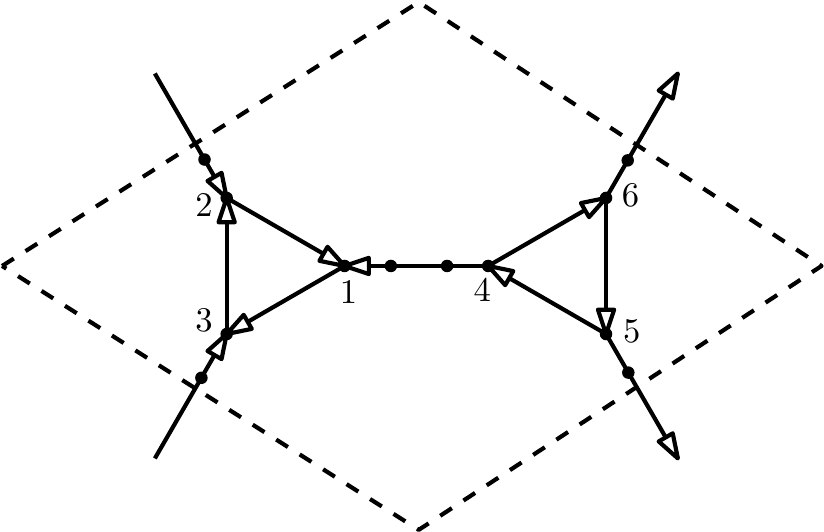}
   \caption{The fundamental domain of $A\FF_n$, which may be 
   compared with Figure \ref{11fd}.}
\label{fig:AFn}
\end{figure}

We assign weights to the edges of $A\FF_n$ as follows.
Each triangular edge of $A\FF_n$ is assigned weight $1$. 
Each non-triangular $s$-type edge of the Fisher 
graph $\FF_n$ is divided into three parts in $A\FF_n$ to which we refer
as  the left edge, the middle edge, and 
the right edge. The left edge and right edges are assigned weight $\eps_s^{-1}$, 
while the middle edge is assigned weight 1. We shall identify the characteristic polynomial
$P^A$ of this dimer model in the forthcoming Lemma \ref{lem:newdimer}.
Let $E_\pa$ be the set of left and right non-triangular edges corresponding to half-edges in 
$\pa$, and  let $V_\pa$ be the set of vertices of edges in $E_\pa$.

There is a one-to-one correspondence
between polygon configurations on $A\HH_n$ and polygon configurations on $\HH_n$.
The latter may be placed in one-to-one correspondence with dimer configurations on $A\FF_n$ as follows.
Consider a polygon configuration $\pi$ on $\HH_n$.
An edge $e\in E_n$ is present in $\pi$
if and only if the corresponding middle 
 edge of $e$ is present in the corresponding dimer configuration $D=D(\pi)$ on $A\FF_n$. 
Once the states of middle edges of $A\FF_n$ are determined, 
they generate a unique dimer configuration on $A\FF_n$.

By consideration of the particular situations that can occur
within a given fundamental domain, one obtains that the correspondence is weight-preserving
(up to a fixed factor), whence
\begin{equation*}
Z_n(P)= \Biggl(\prod_{g\in AE_n} \eps_g \Biggr)Z_n(AD),
\end{equation*}
where $Z_n(AD)$ is the partition function of the above dimer model on $A\FF_n$,
and $\eps_g$ is the parameter corresponding to an edge with the type of $g$.
A similar dimer interpretation is valid
for $Z_{n,\pa}(P)$, and thus we have
\begin{equation}\label{eq:dimrep}
\frac{Z_{n,e\lra f}}{Z_n(P)}=
\Biggl(\prod_{g\in \pa}\eps_g\Biggr)\frac{Z_{n,\pa}(P)}{Z_n(P)} = 
\left(\prod_{g\in \pa}\eps_g^{-1}\right)\frac{Z'_n(AD)}{Z_n(AD)},
\end{equation}
where $Z'_{n}(AD)$ is the partition function for dimer configurations on $A\FF_n$, 
in which an edge of $E_\pa$ has weight $\eps_g$ (where $g$ is the corresponding half-edge), and all the 
other left/right non-triangular edges have unchanged weights $\eps_g^{-1}$.

We assign a clockwise-odd orientation to the edges of $A\FF_n$ as indicated in Figure \ref{fig:AFn}.
The above dimer partition functions may be represented  in terms of the
Pfaffians of the weighted adjacency matrices corresponding to
$Z_n(AD)$ and $Z_n'(AD)$. See \cite{Kast61,Kast63,ZL1,Tes}.

Recall that $A\FF_n$ is a graph embedded in the $n\times n$ torus. Let $\g_x$ and $\g_y$ 
be two non-parallel homology generators of the torus, that is, $\g_x$ and $\g_y$ are  cycles 
winding around the torus, neither of which may be obtained from the other by continuous
movement on the torus. 
Moreover, we assume that $\g_x$ and $\g_y$ are paths in the dual graph
that meet in a unique face and that cross disjoint edge-sets.
For definiteness, we take $\g_x$ (\resp, $\g_y$) to be the upper left (\resp,  upper right) dashed cycles
of the dual triangular lattice, as illustrated in Figure \ref{tri}.
We multiply the weights of all edges crossed by $\g_x$ 
(\resp, $\g_y$) by $z$ or 
$z^{-1}$ (\resp, $w$ or $w^{-1}$), according to their orientations.
Note that $\pa$ crosses neither $\g_x$ nor $\g_y$.

\begin{figure}[htbp]
\centering
\scalebox{1}[1]{\includegraphics*[width=0.7\hsize]{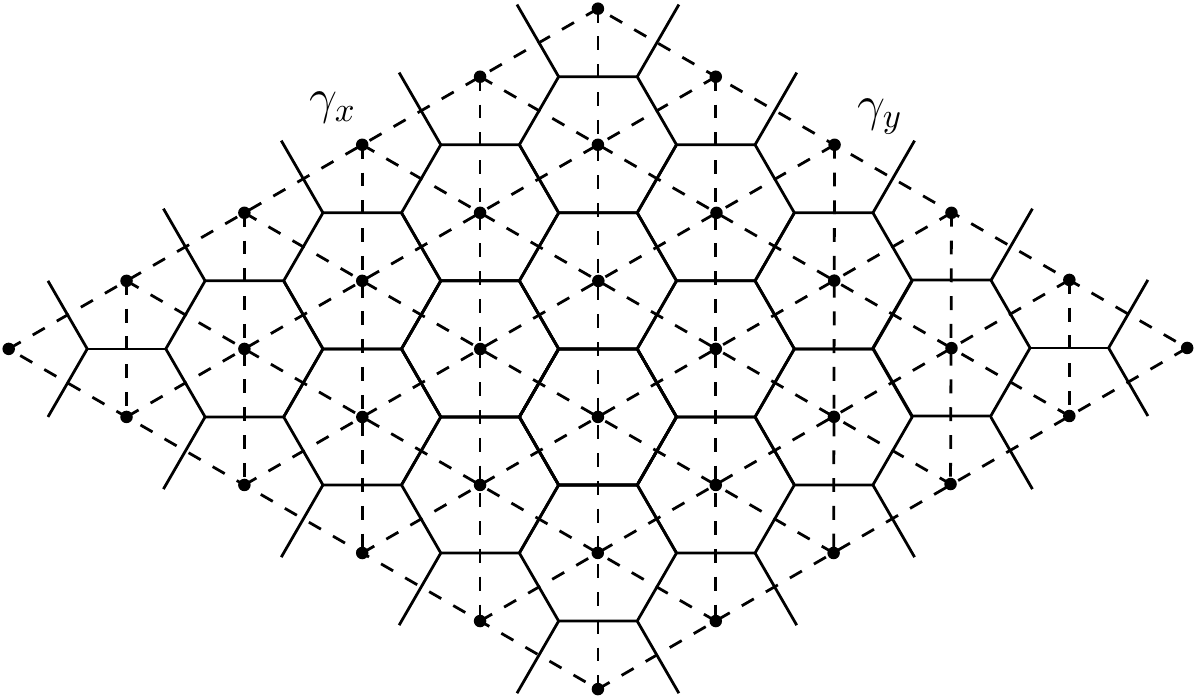}}
\caption{Two cycles $\g_x$ and $\g_y$ in the dual triangular graph of 
the toroidal graph $\HH_n$.
The upper left and lower right sides of the diamond are identified, 
and similarly for the other two sides.}\label{tri}
\end{figure}

Let $K_n(z,w)$ be the weighted adjacency matrix of
the original dimer model above, and let $K_n'(z,w)$ 
be that with the weights of $s$-type edges along $\pa$ changed 
from $\eps_s^{-1}$ to $\eps_s$. 

If $n$ is even, by \eqref{eq:dimrep}
and results of \cite{Kast61,ZL1} and \cite[Chap.\ IV]{MW73},
\begin{equation}\label{sspf}
\frac{Z_{n,e\lra f}}{Z_n(P)}=\Biggl(\prod_{g\in \pa}\eps_g^{-1}\Biggr)
\frac{-\Pf K_n'(1,1)+\Pf K_n'(-1,1)+\Pf K_n'(1,-1)+\Pf K_n'(-1,-1)}
{2Z_n(P)}, 
\end{equation}
where
\begin{equation}\label{sspf2}
2Z_n(P) = -\Pf K_n(1,1)+\Pf K_n(-1,1)+\Pf K_n(1,-1)+\Pf K_n(-1,-1).
\end{equation}
The corresponding formula when $n$ is odd is
\begin{equation*}
\frac{Z_{n,e\lra f}}{Z_n(P)}=\Biggl(\prod_{g\in \pa}\eps_g^{-1}\Biggr)
\frac{\Pf K_n'(1,1)+\Pf K_n'(-1,1)+\Pf K_n'(1,-1)-\Pf K_n'(-1,-1)}
{2Z_n(P)}, 
\end{equation*}
as explained in the discussion of `crossing orientations' of \cite[pp.\ 2192--2193]{SB08}.
The ensuing argument is essentially identical in the two cases, and therefore we may assume
without loss of generality that $n$ is even.

\subsection{The limit as $n\to\oo$}\label{ssec:pflim}

 In  studying the limit of \eqref{sspf} as $n \to\oo$, we shall require some facts about
 the asymptotic behaviour of the inverse matrix of $K_n(\th,\nu)$. We summarise these next.

The graph $A\FF_n$ may be regarded as $n\times n$ copies of
the fundamental domain 
of Figure \ref{fig:AFn}, with vertices labelled as in Figures \ref{fig:AFnlab}--\ref{fig:period}. 
We index these by $(p,q)$ with $p,q = 1,2,\dots,n$,
and let $D_{p,q}$ be the fundamental domain with index $(p,q)$.
Let $\sD=\{(p,q): D(p,q)\cap \pa\ne\es\}$, so that the cardinality of $\sD$
depends only on $|e-f|$.
Each $D_{p,q}$ contains $12$ vertices.  For $v_1,v_2\in D_{1,1}$, we
write
$K_n^{-1}(D_{p_1,q_1},v_1;D_{p_2,q_2},v_2)$ for
the $(u_1,u_2)$ entry of $K_n^{-1}$, where $u_i$ is the translate of $v_i$ lying in $D_{p_i,q_i}$.

\begin{figure}[htbp]
  \centering
\includegraphics*{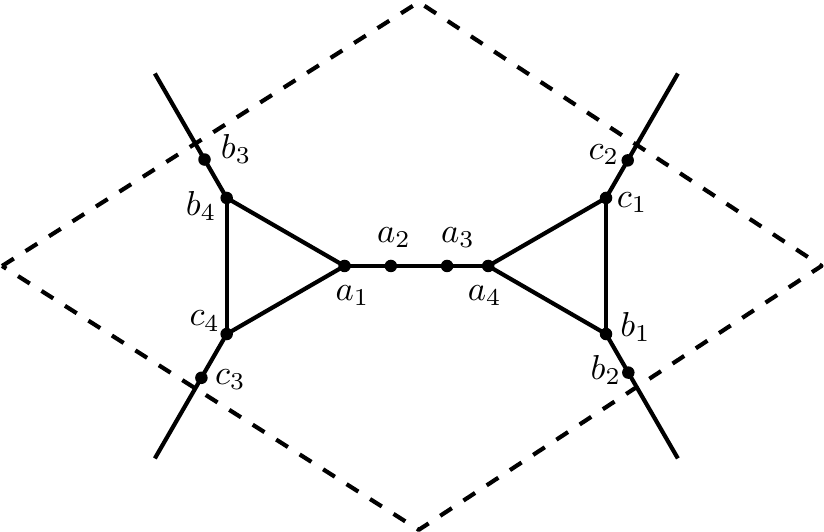}
   \caption{The fundamental domain of $A\FF_n$ with vertex-labels.}
\label{fig:AFnlab}
\end{figure}

\begin{figure}[htbp]
  \centering
\includegraphics*{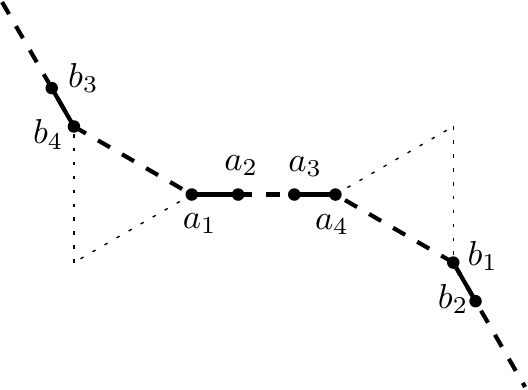}
   \caption{Part of the path $\pa$ between two NW edges.}
\label{fig:period}
\end{figure}

\begin{proposition}\label{prop:limk}
Let $\th,\nu\in\{-1,1\}$. We have that
\begin{align}
&\lim_{n\to\oo}K_n^{-1}(\th,\nu)(D_{p_1,q_1},v_{r};D_{p_2,q_2},v_{s})\label{le}\\
&\hskip2cm = -\frac{1}{4\pi^2}\int_{|z|=1}\int_{|w|=1}z^{p_2-p_1}w^{q_2-q_1}K_1^{-1}(z,w)_{v_{s},v_{r}}\,\frac{dz}{iz}\,\frac{dw}{iw},\nonumber
\end{align}
where $(p_1,q_1),(p_2,q_2)\in\sD$, and $r,s \in\{a_i,b_i:i=1,2,3,4\}$, and 
$K_1^{-1}(z,w)_{v_{s},v_{r}}$ denotes the $(v_s,v_r)$ entry of $K_1^{-1}(z,w)$. 
\end{proposition}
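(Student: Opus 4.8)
The plan is to exploit the toroidal translation invariance of the dimer model on $A\FF_n$ to block-diagonalise the modified adjacency matrix $K_n(\th,\nu)$ by discrete Fourier transform. Since $A\FF_n$ consists of $n\times n$ translates of the fundamental domain of Figure \ref{fig:AFn}, the group $(\ZZ/n\ZZ)^2$ acts on the $12n^2$ vertices by permuting the domain-indices $(p,q)$, and the weights together with the clockwise-odd orientation are invariant under this action except for the twist encoded by $\th,\nu$ at the two families of boundary edges crossed by $\g_x$ and $\g_y$. The simultaneous eigenfunctions of the two commuting shift operators are therefore the plane waves $(p,q)\mapsto z^{p}w^{q}$ tensored with a $12$-vector, where $z,w$ run over the roots of unity $z^n=\th$ and $w^n=\nu$. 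First I would make this identification explicit, checking that the boundary-crossing factors $z^{\pm1},w^{\pm1}$ in the reduced matrix match the phases acquired by a plane wave as it is transported across a boundary, so that on each Fourier mode $K_n(\th,\nu)$ acts precisely as the $12\times 12$ matrix $K_1(z,w)$, giving
\begin{equation*}
K_n(\th,\nu)\cong\bigoplus_{z^n=\th}\ \bigoplus_{w^n=\nu}K_1(z,w).
\end{equation*}

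Next I would invert this decomposition. Provided the parameters lie off the critical surface, so that $UVST\ne0$ and, by Proposition \ref{prop:-1}, the spectral curve $\{P=\det K_1=0\}$ does not meet $\TT^2$, every block $K_1(z,w)$ with $|z|=|w|=1$ is invertible; in particular all modes appearing for $\th,\nu\in\{-1,1\}$ are non-singular and $K_n(\th,\nu)$ is invertible. Transforming back, the $(D_{p_1,q_1},v_r;D_{p_2,q_2},v_s)$ entry of $K_n^{-1}(\th,\nu)$ becomes the finite sum
\begin{equation*}
K_n^{-1}(\th,\nu)(D_{p_1,q_1},v_r;D_{p_2,q_2},v_s)=-\frac1{n^2}\sum_{z^n=\th}\ \sum_{w^n=\nu}z^{p_2-p_1}w^{q_2-q_1}K_1^{-1}(z,w)_{v_s,v_r},
\end{equation*}
the transposition of indices $v_s,v_r$ and the overall sign arising from the Fourier inversion together with the antisymmetry of the Kasteleyn convention; I would verify these signs against Figure \ref{fig:AFnlab}. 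The $n$ roots with $z^n=\th$ are equally spaced on the unit circle, and likewise for $w$, so as $n\to\oo$ the displayed expression is a Riemann sum for the integral on the right-hand side of \eqref{le}: the weight $1/n^2$ passes to $1/(2\pi)^2=1/(4\pi^2)$, and each discrete variable $z=e^{i\phi}$ contributes $d\phi=dz/(iz)$. Since $K_1^{-1}(z,w)$ is continuous (indeed real-analytic) on the compact torus $\TT^2$ in this regime, the Riemann sum converges to the integral, establishing \eqref{le}; note that the right-hand side carries no dependence on $\th,\nu$, so the same limit is obtained for all four boundary conditions.

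The step I expect to be the main obstacle is twofold. First, pinning down the precise orientation- and boundary-induced signs in the block-diagonalisation: the clockwise-odd orientation of $A\FF_n$ and the placement of the factors $z^{\pm1},w^{\pm1}$ must be reconciled with the convention by which a plane wave crosses $\g_x$ and $\g_y$, and only then does the reduced block coincide with the matrix $K_1(z,w)$ whose determinant is the characteristic polynomial $P(z,w)$ of Section \ref{ssec:sc}. This is a bookkeeping argument of the type carried out in \cite{ZL1,ZL-sc}, which I would adapt rather than reproduce. Second, the convergence of the Riemann sum is clean only away from the critical surface, where $K_1^{-1}$ is continuous on $\TT^2$ by Proposition \ref{prop:-1}. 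On the critical surface the block $K_1(z,w)$ degenerates at the single point of $\TT^2$ lying on the spectral curve, so one of the modes with $z,w\in\{\pm1\}$ may be singular and $K_n^{-1}(\th,\nu)$ then undefined; this is exactly the excluded case $\g\in\{\g_1,\g_2\}$ of Theorem \ref{ptp}(a), and I would accordingly understand the proposition to hold for $\g\ne\g_1,\g_2$.
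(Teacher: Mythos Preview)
Your proposal is correct and is essentially the approach the paper invokes: the paper's own proof consists only of a reference to the Fourier block-diagonalisation arguments of \cite[Thm~4.3]{KOS06} and \cite[Sects~4.2--4.4]{RK1}, which are precisely the translation-invariance/Riemann-sum computation you have written out. One small slip worth noting is that $\det K_1(z,w)$ for the augmented graph $A\FF_n$ is the polynomial $P^A(z,w)$ of Lemma~\ref{lem:newdimer} rather than $P(z,w)$ itself, though the two share the same zero locus so your invertibility and convergence reasoning is unaffected.
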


\begin{proof}
The limiting  entries of $K_n^{-1}(\th,\nu)$ as $n \to\oo$
can be computed explicitly using the arguments of \cite[Thm 4.3]{KOS06} and 
\cite[Sects 4.2--4.4]{RK1}, details of which are omitted here.
\end{proof}

Note that the right side of \eqref{le} does not depend on the values of $\th,\nu\in\{-1,1\}$.

\subsection{Representation of the Pfaffian ratios}\label{ssec:pf}
We return to the formulae \eqref{tpc} and \eqref{sspf}--\eqref{sspf2} for the two-edge correlation.
The matrices $K_n(\th,\nu)$ and $K_n'(\th,\nu)$ are 
antisymmetric when $\th,\nu \in\{-1,1\}$.
For $\th,\nu\in\{-1,1\}$, 
\begin{align}
\frac{\det K_n'(\th,\nu)}{\det K_n(\th,\nu)}
&=\det[K_n'(\th,\nu) K_n^{-1}(\th,\nu)] \label{kq}\\
&=\det \bigl[R_nK_n^{-1}(\th,\nu)+I\bigr],\nonumber
\end{align}
where  
\begin{equation}
R_n=K_n'(\th,\nu)-K_n(\th,\nu).\label{rd}
\end{equation}

The following argument is similar to that of \cite[Thm 4.2]{ZL1}.
Let 
\begin{equation*}
Y(\lambda)=\begin{pmatrix}
0&\lambda\\-\lambda&0 
\end{pmatrix},
\end{equation*}
and define the $4\times 4$ block matrix 
\begin{equation*}
S_s=\begin{pmatrix}
Y(\eps_s-\eps_s^{-1}) &0\\
0 &Y(\eps_s-\eps_s^{-1})
\end{pmatrix}, \qq s=a,b.
\end{equation*}

Each half-edge of $\HH _n$ along  $\pa$  corresponds to an edge of 
$A\FF_n$, namely, a left or right non-triangular edge. Moreover, the path $\pa$ has a 
periodic structure in $A\HH_n$, 
each period of which consists of four edges of $A\HH _n$, namely, 
a NW half-edge, 
followed by two horizontal half-edges, followed by another NW half-edge.
These four edges  
correspond to four non-triangular edges of $A\FF _n$ with endpoints denoted $v_{b_3}$, $v_{b_4}$, 
$v_{a_1}$, $v_{a_2}$, $v_{a_3}$, $v_{a_4}$, $v_{b_1}$, $v_{b_2}$.
See Figure \ref{fig:period}.

Let $(p,q)\in\sD$. The $12\times 12$ block
of $R_n$ with rows and columns labelled by the vertices in $D_{p,q}$ may be written as
\begin{equation}
R_n(D_{p,q},D_{p,q})=
\begin{pmatrix}
S_a&0&0\\0&-S_b&0\\0&0&0\end{pmatrix}.
\label{rpq}
\end{equation}
Each entry in \eqref{rpq} is a $4\times 4$ block, and the rows and columns are 
indexed by $v_{a_2},v_{a_1},v_{a_4},v_{a_3},v_{b_2},v_{b_1},v_{b_4},v_{b_3},v_{c_1},\dots,v_{c_4}$. 
All other entries of $R_n$ equal $0$.

Owing to the special structure of $R_n$,  
the determinant of $S_n:=R_nK_n^{-1}(\th,\nu)+I$ is the same as that
of a certain submatrix of $S_n$ given as follows. From $S_n$,
we retain all rows and columns indexed by translations (within $\sD$)
of the $v_{a_i}$ and $v_{b_j}$.
Since each fundamental domain contains four such vertices of each type,
the resulting submatrix $S_{n,\pa}$ is square with dimension $8|\sD|$. 
By following
the corresponding computations of \cite[Sect.\ 4]{ZL1} and \cite[Chap.\ VIII]{MW73}, we find that 
$\det S_n=\det S_{n,\pa}$. 

Let $X_\pa$ be the $V_\pa\times V_\pa$ block diagonal matrix with rows and columns indexed by vertices in 
$V_\pa$, and defined as follows. Adopting a suitable ordering of $V_\pa$ as above,
the diagonal $2\times 2$ blocks of $X_\pa$ are $Y(\epsilon_s-\eps_s^{-1})$, 
where $s$ depends on the type of the corresponding edge, 
and off-diagonal $2\times 2$ blocks of $X_\pa$ are 0.
Note that
\begin{equation}\label{eq:xdet}
\det X_\pa = \prod_{g\in E_\pa} \left(\eps_g-\frac1{\eps_g}\right)^2.
\end{equation}

Let $K_n^{-1}(\theta,\nu)_\pa$ 
be the submatrix of $K_n^{-1}(\theta,\nu)$ with rows and columns indexed by $V_\pa$. 
By Proposition \ref{prop:limk}, the limit
\begin{equation}\label{eq:limk}
K_\pa:= \lim_{n\to\oo}  K_n^{-1}(\theta,\nu)_\pa
\end{equation}
exists and is independent of $\th,\nu\in\{-1,1\}$.

\begin{proposition}\label{prop:zl2}
The limit $M(e,f)^2=\lim_{n\to\oo}M_n(e,f)^2$ exists and satisfies\mbox{}
\begin{equation}
M(e,f)^2=\lim_{n\to\oo}\left(\frac{Z_{n,e\lra f}}{Z_n(P)}\right)^2
=\det(X_\pa K_{\ell}^{-1}+I)\Biggl(\prod_{g\in E_{\ell}}\frac1{\epsilon_g^2}\Biggr).\label{lq0}
\end{equation}
\end{proposition}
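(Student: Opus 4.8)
The plan is to pass from the Pfaffian representation \eqref{sspf}--\eqref{sspf2} of the two-edge correlation to the determinantal limit on the right-hand side of \eqref{lq0}, exploiting two features: squaring a Pfaffian of an antisymmetric matrix yields a determinant, and the perturbation $R_n$ of \eqref{rd} is supported on the finitely many fundamental domains indexed by $\sD$. Working with $n$ even as permitted, I would first record, for each $(\th,\nu)\in\{-1,1\}^2$, the identity
\[
\left(\frac{\Pf K_n'(\th,\nu)}{\Pf K_n(\th,\nu)}\right)^2 = \frac{\det K_n'(\th,\nu)}{\det K_n(\th,\nu)} = \det\bigl[R_n K_n^{-1}(\th,\nu)+I\bigr],
\]
which is \eqref{kq} combined with $\Pf(M)^2=\det M$. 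By \eqref{eq:dimrep}, the squared correlation $(Z_{n,e\lra f}/Z_n(P))^2$ is then the product of $\prod_{g\in E_\pa}\eps_g^{-2}$ with the square of the weighted ratio of Pfaffians appearing in \eqref{sspf}.

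Next I would perform the determinantal reduction and pass to the limit. Since $R_n$ vanishes off the rows and columns indexed by $V_\pa$ (see \eqref{rpq}), the matrix $R_nK_n^{-1}(\th,\nu)+I$ coincides with the identity on every row outside $V_\pa$, so a cofactor expansion along these rows shows that its determinant equals that of the $8|\sD|\times 8|\sD|$ principal block $S_{n,\pa}=(R_n)_{V_\pa}\,K_n^{-1}(\th,\nu)_\pa+I$; this is the reduction $\det S_n=\det S_{n,\pa}$ already noted. Under the ordering of $V_\pa$ fixed before \eqref{eq:xdet} the block $(R_n)_{V_\pa}$ becomes $X_\pa$, the sign carried by the $-S_b$ blocks of \eqref{rpq} being absorbed by swapping the two endpoints of each NW half-edge: this uses $Y(-\la)=-Y(\la)$ together with the invariance of $\det(\,\cdot\,+I)$ under simultaneous permutation of rows and columns. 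Hence $\det S_{n,\pa}=\det(X_\pa K_n^{-1}(\th,\nu)_\pa+I)$, and letting $n\to\oo$ via Proposition \ref{prop:limk} in the form \eqref{eq:limk}, each of the four determinants converges to the single limit $\det(X_\pa K_\pa+I)$, independently of $(\th,\nu)$.

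The final step, which I expect to be the main obstacle, is to upgrade the limits of the four Pfaffian \emph{ratios} into the clean square on the right of \eqref{lq0}. Setting $\rho_n(\th,\nu):=\Pf K_n'(\th,\nu)/\Pf K_n(\th,\nu)$, the previous paragraph gives $\rho_n(\th,\nu)^2\to\det(X_\pa K_\pa+I)$, so each $\rho_n(\th,\nu)$ tends to $\pm\sqrt{\det(X_\pa K_\pa+I)}$; the genuine difficulty is that the determinantal limit alone does not fix the signs, and one must show that all four converge to a \emph{common} value $r$. This is a sign-tracking argument: one follows the effect on the Pfaffian of continuously turning on the localized perturbation $R_n$ across the four crossing orientations, as in \cite[Chap.\ VIII]{MW73} and \cite[Thm 4.2]{ZL1}. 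Granting it, write the weighted ratio of \eqref{sspf} as $\sum_{\th,\nu}c_{\th,\nu}\rho_n(\th,\nu)q_n(\th,\nu)$, where $c_{1,1}=-1$ and $c_{\th,\nu}=+1$ otherwise are the signs of \eqref{sspf2} and $q_n(\th,\nu)=\Pf K_n(\th,\nu)/(2Z_n(P))$ satisfies $\sum_{\th,\nu}c_{\th,\nu}q_n(\th,\nu)=1$; since $2Z_n(P)>0$ and the $q_n(\th,\nu)$ stay bounded, this ratio converges to $r$. Squaring removes the sign, establishes the existence of $M(e,f)^2$, and yields \eqref{lq0} after restoring the factor $\prod_{g\in E_\pa}\eps_g^{-2}$.
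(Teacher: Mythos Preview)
Your approach is essentially the paper's: same Pfaffian representation, same reduction $\det S_n=\det S_{n,\pa}$ via the finite support of $R_n$, same appeal to Proposition~\ref{prop:limk} for the limit. The one substantive difference is how the square root is taken. You write $\rho_n(\th,\nu)^2\to\det(X_\pa K_\pa+I)$ and then say ``each $\rho_n(\th,\nu)$ tends to $\pm\sqrt{\det(\cdots)}$''. Note that convergence of squares does not by itself imply convergence of $\rho_n$; the sign could in principle flip with $n$. Your deformation argument, if carried through, would fix this, but the paper handles the point more directly: it factors
\[
\det\bigl(X_\pa K_n^{-1}(\th,\nu)_\pa+I\bigr)=\det X_\pa\cdot\det\bigl(K_n^{-1}(\th,\nu)_\pa+X_\pa^{-1}\bigr),
\]
observes that $K_n^{-1}(\th,\nu)_\pa+X_\pa^{-1}$ is \emph{antisymmetric} (both summands are), and hence writes
\[
\rho_n(\th,\nu)=(-1)^j\,\Pf\bigl[K_n^{-1}(\th,\nu)_\pa+X_\pa^{-1}\bigr]\prod_{g\in E_\pa}\Bigl(\eps_g-\tfrac1{\eps_g}\Bigr),
\]
with a single combinatorial sign $(-1)^j$ independent of $(\th,\nu)$. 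The Pfaffian on the right is a polynomial in the entries of $K_n^{-1}(\th,\nu)_\pa$, so it converges without any sign ambiguity to $\Pf[K_\pa+X_\pa^{-1}]$; this simultaneously gives convergence of each $\rho_n(\th,\nu)$ and equality of the four limits. Your route and the paper's arrive at the same place, but the antisymmetry observation replaces the sign-tracking you flagged as the main obstacle.

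One small omission: the factorization above requires $X_\pa$ invertible, i.e.\ $\eps_a,\eps_b\neq 1$. The paper treats the degenerate cases $\eps_a=1$ and/or $\eps_b=1$ separately (then $R_n$, or the relevant part of it, vanishes and the identity is trivial). Your sketch does not distinguish these cases.
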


\begin{proof}
Let $\th,\nu\in\{-1,1\}$, and assume first that $\epsilon_a,\epsilon_b\neq 1$.
By \eqref{kq}--\eqref{eq:xdet} and the discussion before the proposition,
\begin{align*}
\frac{\det K_n'(\th,\nu)}{\det K_n(\th,\nu)}
&=\det[X_\pa K_n^{-1}(\theta,\nu)_\pa+I]\\
&=\det[K_n^{-1}(\theta,\nu)_\pa+X_\pa^{-1}]\det X\\
&=\det[K_n^{-1}(\theta,\nu)_\pa+X_\pa^{-1}]
\prod_{g\in E_\pa}\left(\epsilon_g-\frac{1}{\epsilon_g}\right)^2.
\end{align*}
On taking square roots, and noting that $K_n^{-1}(\theta,\nu)_\pa+X_\pa^{-1}$
is antisymmetric,
\begin{equation*}
\frac{\Pf K_n'(\theta,\nu)}{\Pf K_n(\theta,\nu)}=
(-1)^j\Pf [K_n^{-1}(\theta,\nu)_\pa+X_\pa^{-1}]
\prod_{g\in E_\pa}\left(\epsilon_g-\frac{1}{\epsilon_g}\right),
\end{equation*}
for some $j$ that is independent of $\th$, $\nu$.

By \eqref{sspf},
\begin{align*}
2Z_{n,e\lra f}
&=(-1)^j
\Bigl\{-p(1,1) \Pf K_n(1,1)
+p(-1,1) \Pf K_n(-1,1) \\
&\hskip1.5cm  +p(1,-1)\Pf K_n(1,-1)+ p(-1,-1)\Pf K_n(-1,-1)\Bigr\}\prod_{g\in E_{\ell}}\left(1-\frac{1}{\epsilon_g^2}\right),
\end{align*}
where $p(\th,\nu)=\Pf [K_n^{-1}(\theta,\nu)_\pa+X_\pa^{-1}]$.
By \eqref{sspf2} and \eqref{eq:limk},
\begin{equation*}
\lim_{n\to\oo}\left(\frac{Z_{n,e\lra f}}{Z_n(P)}\right)^2=\left[\Pf(K_{\ell}^{-1}+X_\pa^{-1})
\prod_{g\in E_{\ell}}\left(1-\frac{1}{\epsilon_g^2}\right)\right]^2,
\end{equation*}
and \eqref{lq0} follows by \eqref{eq:xdet} and \eqref{tpc}.

Assume next that $\epsilon_a=\epsilon_b=1$. We have
\begin{equation*}
K_n'(\theta,\nu)=K_n(\theta,\nu), \qquad\mathrm{for}\ \theta,\nu =\pm 1.
\end{equation*}
Since  $X_\pa=0$ in this case, we obtain \eqref{lq0} once again.
If exactly one of $\epsilon_a$, $\epsilon_b$ equals $1$, we obtain \eqref{lq0} as above.
\end{proof}

\subsection{Proof of Theorem \ref{ptp}(a)}
As in \cite[Thm 4.3]{KOS06} and \cite[Sects 4.2--4.4]{RK1}, by Proposition \ref{prop:zl2}, 
the limit  $M(e,f)^2=\lim_{n\to\oo}M_n(e,f)^2$ 
exists and equals the 
determinant of a block Toeplitz matrix with dimension depending on $|e-f|$, and with symbol $\psi$ given by 
\begin{equation}\label{eq:Toep}
\psi(\zeta)=\frac{1}{2\pi}\int_{0}^{2\pi}
T(\zeta,\phi) \,d\phi,
\end{equation}
where  $T(\zeta,\phi)$ is the $8 \times 8$ matrix
with rows 
and columns indexed by 
$v_{a_1},v_{a_2},v_{a_3},v_{a_4}$, $v_{b_1},v_{b_2},v_{b_3},v_{b_4}$ 
(with rows and columns ordered differently)
given by 
$$
\footnotesize\begin{pmatrix}
\eps_a^{-1}+K_1^{-1}(\zeta,e^{i\phi})_{v_{a_2},v_{a_1}}\lambda_a&K_1^{-1}(\zeta,e^{i\phi})_{v_{a_2},v_{a_2}}\lambda_a&\cdots&K_1^{-1}(\zeta,e^{i\phi})_{v_{a_2},v_{b_4}}\lambda_a\\ 
-K_1^{-1}(\zeta,e^{i\phi})_{v_{a_1},v_{a_1}}\lambda_a&\eps_a^{-1}-K_1^{-1}(\zeta,e^{i\phi})_{v_{a_1},v_{a_2}}\lambda_a&\cdots&-K_1^{-1}(\zeta,e^{i\phi})_{v_{a_1},v_{b_4}}\lambda_a\\
\vdots&\vdots&\ddots&\vdots\\
K_1^{-1}(\zeta,e^{i\phi})_{v_{b_3},v_{a_1}}\lambda_b&K_1^{-1}(\zeta,e^{i\phi})_{v_{b_3},v_{a_2}}\lambda_b&\cdots&\eps_b^{-1}+K_1^{-1}(\zeta,e^{i\phi})_{v_{b_3},v_{b_4}}\lambda_b 
\end{pmatrix},
$$
and $\lambda_g=1-\eps_g^{-2}$. See \cite{HW0,HW,HW76} and the references therein for
accounts of Toeplitz matrices.

One may write
\begin{equation}
 [K_1^{-1}(z,w)]_{i,j}=\frac{Q_{i,j}(z,w)}{P^A(z,w)},\label{k1i}
\end{equation} 
where $Q_{i,j}(z,w)$ is a Laurent polynomial in $z$, $w$ derived in terms of certain cofactors of 
$K_1(z,w)$, and $P^A(z,w)=\det K_1(z,w)$ is the characteristic polynomial of the dimer model. 

\begin{lemma}\label{lem:newdimer}
The characteristic polynomial $P^A$ of the above dimer model on $A\FF_n$ satisfies 
$P^A(z,w)=(\eps_a\eps_b\eps_\tc)^{-4}P(z,w)$, where
$P(z,w)$ is the characteristic polynomial of \eqref{pzw}.
\end{lemma}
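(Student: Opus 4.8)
The plan is to compute $P^A(z,w)=\det K^A(z,w)$, where $K^A(z,w)$ denotes the $12\times 12$ modified Kasteleyn matrix of the fundamental domain of $A\FF_n$ (Figure \ref{fig:AFn}), and to reduce it to the $6\times 6$ matrix $K$ of Section \ref{ssec:sc}, for which $P(z,w)=\det K$ by \eqref{pzw}. The reduction is a Schur-complement elimination of the six new (degree-two) vertices, followed by a diagonal gauge transformation that absorbs the weights.

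First I would split the $12$ vertices into the six \emph{original} vertices $1,\dots,6$ inherited from $\FF_n$ and the six \emph{subdivision} vertices placed on the three non-triangular edges of the domain, and write $K^A$ in block form
\[
K^A=\begin{pmatrix} K_O & B\\ C & K_N\end{pmatrix},
\]
where $K_O$, $K_N$ are the blocks on the original and subdivision vertices. Since the two subdivision vertices on a given non-triangular edge are joined only to each other (by the middle edge, of weight $1$) and to the two original endpoints of that edge (by the left and right edges, of weight $\eps_s^{-1}$), the block $K_N$ is block-diagonal, consisting of three $2\times 2$ antisymmetric blocks $\left(\begin{smallmatrix}0&m\\-m&0\end{smallmatrix}\right)$ with $m=\pm1$ the signed middle-edge weight; hence $\det K_N=1$. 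The Schur complement then gives $\det K^A=\det K_N\cdot\det(K_O-BK_N^{-1}C)=\det(K_O-BK_N^{-1}C)$.

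Next I would identify $\tilde K:=K_O-BK_N^{-1}C$. The block $K_O$ carries only the triangular edges of $A\FF_n$, each of weight $1$, so its entries are $0$ or $\pm1$, the former non-triangular couplings having vanished. A direct $2\times 2$ computation of the Schur correction across a single subdivided $s$-type edge $u$--$p$--$q$--$v$ shows that eliminating $p,q$ reinstates a coupling between $u$ and $v$ of magnitude $(\eps_s^{-1})(\eps_s^{-1})/1=\eps_s^{-2}$ (the product of the left and right weights over the middle weight), carrying whichever factor $z^{\pm1}$, $w^{\pm1}$ was attached to the crossing edge. Thus $\tilde K$ has triangular entries $\pm1$ and $s$-type non-triangular entries $\pm\eps_s^{-2}$. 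I would then exhibit the diagonal matrix $D=\mathrm{diag}(\eps_a^{-1},\eps_b^{-1},\eps_c^{-1},\eps_a^{-1},\eps_b^{-1},\eps_c^{-1})$ and check, entry by entry, that $\tilde K=DKD$: using $A=\eps_b\eps_c$, $B=\eps_a\eps_c$, $C=\eps_a\eps_b$ from \eqref{eq:half1}, a triangular entry of $DKD$ equals $\eps_s^{-1}\eps_{s'}^{-1}\cdot(\pm\eps_s\eps_{s'})=\pm1$, while an $s$-type non-triangular entry equals $\eps_s^{-1}\eps_s^{-1}\cdot(\pm1)=\pm\eps_s^{-2}$, matching $\tilde K$. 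Here I use that vertex $1$, being incident to the triangular edges $C=\eps_a\eps_b$ and $B=\eps_a\eps_c$, is the ``$a$-vertex'' carrying $\eps_a^{-1}$, and similarly for the others, so the rescaling is consistent across triangular and non-triangular edges. Since $\det D=(\eps_a\eps_b\eps_c)^{-2}$, I conclude
\[
P^A=\det K^A=\det(DKD)=(\det D)^2\det K=(\eps_a\eps_b\eps_c)^{-4}P(z,w).
\]

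The main obstacle is bookkeeping of the orientation signs and of the $z,w$ weights, rather than any conceptual difficulty. Concretely, I must verify that the clockwise-odd orientation fixed in Figure \ref{fig:AFn} makes the signs produced by the Schur reduction agree with those of $K$ after conjugation by $D$, and that the homology cycles $\g_x,\g_y$ can be routed so as to cross only left/right (never middle) non-triangular edges; the latter is what keeps $\det K_N=1$ and delivers the correct powers of $z$ and $w$ in $\tilde K$. These verifications are routine given the explicit fundamental domains of Figures \ref{11fd} and \ref{fig:AFn}, and I would relegate them to the remark that the signs and $z,w$-exponents are confirmed by inspection of those figures.
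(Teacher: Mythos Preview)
Your proof is correct and takes a genuinely different route from the paper's. The paper argues combinatorially: it expands $\det K^A$ and $\det K$ as sums over oriented loop configurations (cycles and doubled edges covering every vertex), exhibits a bijection between such configurations on the fundamental domains of $A\FF_n$ and $\FF_n$ by ``contracting'' each subdivided non-triangular edge, and observes that the weight of each configuration changes by the fixed factor $(ABC)^2=(\eps_a\eps_b\eps_c)^4$. Your argument is linear-algebraic: you eliminate the six subdivision vertices by a Schur complement (which is clean because $K_N$ is block-diagonal with unit determinant), and then identify the resulting $6\times6$ matrix with $DKD$ for the diagonal gauge $D=\mathrm{diag}(\eps_a^{-1},\eps_b^{-1},\eps_c^{-1},\eps_a^{-1},\eps_b^{-1},\eps_c^{-1})$, so that the factor $(\eps_a\eps_b\eps_c)^{-4}$ emerges as $(\det D)^2$. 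Your approach has the virtue of making the origin of the constant completely explicit and of being mechanical once the block structure is written down; the paper's approach is shorter and sidesteps the sign bookkeeping entirely, since the loop bijection is manifestly weight-multiplicative. One small strengthening of your write-up: rather than deferring the sign verification to inspection of the figures, you could note that $\tilde K$ is antisymmetric (Schur complements of antisymmetric matrices by antisymmetric blocks remain antisymmetric) and is a Kasteleyn matrix for $\FF_n$ with edge-weights $1$ on triangles and $\eps_s^{-2}$ on non-triangular edges; since any two Kasteleyn orientations differ by a $\pm1$ diagonal gauge, $\det\tilde K=\det(DKD)$ holds regardless of the particular signs, because the squared determinant of the gauge is insensitive to them.
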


\begin{proof}
The characteristic polynomial $P^A$ satisfies $P^A(z,w)=\det K_1(z,w)$. 
Each term in the expansion of the determinant corresponds to an oriented loop configuration 
consisting of oriented cycles and doubled edges, with the property that each vertex has exactly two incident 
edges. It may be checked that there is a one-to-one correspondence 
between loop configurations on the two graphs of Figures \ref{fig:AFn} 
and \ref{11fd}, by preserving the track of each cycle and adding doubled 
edges where necessary. The weights of a pair of corresponding loop configurations 
differ by a multiplicative factor of $(ABC)^2=(\eps_a\eps_b\eps_\tc)^4$. 
\end{proof}

By the above, the limit $M(e,f)^2$ exists whenever $P^A(z,w)$ has no
zeros on the unit torus $\TT^2$. By Lemma \ref{lem:newdimer} and Proposition \ref{prop:-1},
the last occurs if and only if $UVST \ne 0$. The proof of Theorem \ref{ptp}(a) is
complete, and we turn towards parts (b) and (c).

\subsection{Proofs of Theorem \ref{ptp}(b,\,c)}\label{ssec:bc}
Consider an infinite block Toeplitz matrix $J$, viewed as the limit of an increasing 
sequence of finite truncated block Toeplitz matrices $J_n$.  
When the corresponding spectral curve does not
intersect the unit torus, 
the existence of $\det J$ as the limit of $\det J_n$ is 
proved in \cite{HW0,HW}.  By Lemma \ref{lem:newdimer} and Proposition \ref{prop:-1},
the spectral curve condition holds if and only if $UVST \ne 0$.
We deduce
the existence of the limit 
\begin{equation}\label{eq:defLa}
\La(\al,\be,\g):= \lim_{|e-f|\to\oo} M(e,f)^2,
\end{equation}
whenever $UVST \ne 0$.
By Proposition \ref{prop:-1}, the function $\Psi$ is defined on the domain 
$D:= (0,\oo)^3\setminus\{UVST=0\}$.

\begin{lemma}\label{ana} 
Assume $\al,\be,\g>0$. The function $\La$ is an analytic function of
the complex variables  $\al$, $\be$, $\g$ 
except when $UVST=0$, where $U,V,S,T$ are given by \eqref{u}.
\end{lemma}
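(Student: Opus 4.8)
The plan is to show analyticity of $\La$ via the explicit Toeplitz structure established just above. By Proposition \ref{prop:zl2} and the subsequent discussion, $M(e,f)^2$ equals the determinant of a finite block Toeplitz matrix whose symbol $\psi$ is given by \eqref{eq:Toep}, and whose entries are integrals over $w=e^{i\phi}$ of ratios $Q_{i,j}(z,w)/P^A(z,w)$ as in \eqref{k1i}. The limit $\La=\lim_{|e-f|\to\oo}M(e,f)^2$ is, by the Widom theory of \cite{HW0,HW}, expressed through the Wiener--Hopf factorization of the symbol $\psi(\zeta)$ on the unit circle $|\zeta|=1$. The key point is that, on the domain $D=(0,\oo)^3\setminus\{UVST=0\}$, all the objects entering this representation depend analytically on $(\al,\be,\g)$, and I would verify this dependence stage by stage.

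First I would note that $P^A(z,w)$ is a Laurent polynomial in $z,w$ whose coefficients are polynomials in $\al,\be,\g$ (by Lemma \ref{lem:newdimer} and the explicit form of $P(z,w)$ following \eqref{pzw}). By Proposition \ref{prop:-1} together with Lemma \ref{lem:newdimer}, the condition $UVST\ne 0$ is exactly the condition that $P^A$ has no zero on the unit torus $\TT^2$. Hence, for $(\al,\be,\g)\in D$ and $(z,w)\in\TT^2$, the denominator $P^A(z,w)$ is bounded away from $0$, uniformly on compact subsets of $D$. Consequently each entry $K_1^{-1}(\zeta,e^{i\phi})_{i,j}=Q_{i,j}/P^A$ is jointly analytic in $(\al,\be,\g)$ and continuous in $(\zeta,\phi)$ on $\TT^2$, and the $\phi$-integration in \eqref{eq:Toep} producing the symbol $\psi(\zeta)$ preserves analyticity in the parameters by differentiation under the integral sign, justified by the uniform lower bound on $|P^A|$.

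Second, I would invoke the block-Toeplitz limit theorem of Widom \cite{HW0,HW}: when the symbol $\psi$ extends analytically to an annulus around $\TT=\{|\zeta|=1\}$ and $\det\psi(\zeta)$ does not vanish there (equivalently, the spectral curve misses $\TT^2$), the sequence $\det J_n$ converges to a limit $\det J$ given by a convergent expression built from the Fourier coefficients of $\log\det\psi$ and from a Fredholm-type determinant, all of which are analytic functionals of the symbol. Since $\psi$ depends analytically on $(\al,\be,\g)$ by the first step, and the non-vanishing hypothesis holds precisely on $D$, the limit $\La$ inherits analyticity on $D$. I would treat $\al,\be,\g$ as complex variables in a neighbourhood of the positive-real point in question, noting that $P^A$, $Q_{i,j}$, and hence $\psi$ all continue analytically, and the non-vanishing of $P^A$ on $\TT^2$ persists on a complex neighbourhood by continuity.

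The main obstacle I anticipate is making rigorous the claim that the Widom limit is an \emph{analytic} functional of the symbol, rather than merely that the limit exists for each fixed parameter value. The cleanest route is to observe that the relevant Widom representation expresses $\det J$ as a ratio (or product) of quantities each of which is manifestly analytic: exponentials of sums of products of Fourier coefficients of $\log\det\psi(\zeta)$, together with a determinant of an operator depending holomorphically on $\psi$. Analyticity of the Fourier coefficients in $(\al,\be,\g)$ follows from the uniform non-vanishing of $P^A$ on $\TT^2$ and Morera's theorem applied to the contour integrals defining them; uniform convergence of the defining series on compact subsets of $D$, guaranteed by the spectral gap, then upgrades pointwise analyticity to analyticity of the limit via the standard fact that a locally uniform limit of analytic functions is analytic. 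This reduces the lemma to the single analytic input of Proposition \ref{prop:-1}, namely that $UVST\ne 0$ characterises the spectral gap.
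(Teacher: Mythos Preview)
Your proposal is correct and follows essentially the same approach as the paper: represent $\La$ as the limiting determinant of block Toeplitz matrices with symbol $\psi$, invoke Widom's formula (from \cite{HW0,HW}), and observe that non-analyticity can occur only where the spectral curve meets $\TT^2$, which by Lemma~\ref{lem:newdimer} and Proposition~\ref{prop:-1} is precisely where $UVST=0$. The paper's own proof is terser, deferring the details to \cite[Lemmas~4.4--4.7]{ZL1} and \cite[Thm~8.7]{GL6}, whereas you spell out more of the analytic-dependence argument (differentiation under the integral, Morera, uniform non-vanishing of $P^A$); but the underlying strategy is identical.
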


As noted after Theorem \ref{ptp}, $\La$ is singular when $UVST=0$.

\begin{proof}
This holds  as in the proofs of \cite[Lemmas 4.4--4.7]{ZL1}. 
We consider $\Lambda$ as the determinant of a block 
Toeplitz matrix, and use Widom's formula (see \cite{HW0,HW},
and also \cite[Thm 8.7]{GL6}) to  evaluate this determinant.
As in the proof of \cite[Thm 8.7]{GL6},
$\La$ can be non-analytic only if the spectral
curve intersects the unit torus, which is to say 
(by Lemma \ref{lem:newdimer} and Proposition \ref{prop:-1})
if $UVST=0$.
\end{proof}

The equation $UVST=0$ defines a surface in the first octant 
$(0,\oo)^3$, whose complement is a union of
five open, connected components (see Proposition \ref{prop:uvst}). By Lemma \ref{ana},
$\Psi$ is analytic on each such component. 
It follows that, on any such component: either $\Psi\equiv 0$, or 
$\La$ is non-zero except possibly on a nowhere dense set.

Let $\al,\be,\g>0$. By Proposition \ref{prop:uvst}, $UVST\ne 0$ 
if and only if
\begin{equation}\label{eq:z}
\g\in(0,\g_1)\cup(\g_1,\g_2)\cup(\g_2,\oo),
\end{equation}
where the $\g_i$ are given by \eqref{eq:gamma}.

\smallskip
\noindent
\emph{Proof of Theorem \ref{ptp}(b).} 
By Proposition \ref{prop:uvst},  $UVST \ne 0$ on the open, connected region $\Rsup$.
Therefore, $\La$ is analytic on $\Rsup$. Hence,
either $\La\equiv 0$ on $\Rsup$, or $\La\not\equiv 0$ on
$\Rsup$ and the zero set $Z:=\{r=(\al,\be,\g) \in \Rsup: \La(r)=0\}$ is
nowhere dense in $\Rsup$. 
It therefore suffices to find $(\al,\be,\g)\in \Rsup$ such that $\La(\al,\be,\g)\ne 0$.

Consider the \ot model of Sections \ref{sec:12}--\ref{sec:eec} 
with $a=b>0$ and $c>4a$. By \eqref{sabc}, \eqref{abc},
and \eqref{eq:half1}, the corresponding polygon model has
parameters
$$
\al=\be=\frac{c-2a}{c+2a},\qq \g=\frac{c^2}{(c-2a)(c+2a)}.
$$
In this case, $\g_2=\oo$ and $\g\in(\g_1,\g_2)$.

By \cite[Thm 3.1]{GL6}, for almost every
such $c$, the \ot model has non-zero long-range order.
By Lemma \ref{lem:12poly},  $\La(\al,\be,\g)\ne 0$ for such $c$.

\smallskip
\noindent
\emph{Proof of Theorem \ref{ptp}(c).}
By Remark \ref{rem:isi}, when $\al,\be,\g>0$ are sufficiently small, the two-edge 
correlation function $M(e,f)$  of the polygon model equals the
two-spin correlation function $\langle \si_e\si_f\rangle$
of a ferromagnetic Ising model on $A\HH$ at high temperature. 
Since the latter has zero
long-range order, it follows that $\La=0$. Suppose, in addition, that
$\al\be<1$ and $\g<\g_1$. Since
$\La$ is analytic on $\Rsub^1$ (in the notation of \eqref{eq:rdef}),
we deduce that $\La\equiv 0$ on $\Rsub^1$.
We next extend this conclusion to $\Rsub^k$ with $k=2,3,4$.

Let $(\al,\be,\g)\in\Rsub^4$. 
By \eqref{eq:gamma=}, we have that $\al^{-1}\be<1$ and $\g^{-1}<\g_1(\al^{-1},\be)$,
so that $(\al^{-1},\be,\g^{-1})\in \Rsub^1$. 
By Theorem \ref{thm:sym},
$$
\La(\al,\be,\g) = \La(\al^{-1},\be,\g^{-1})=0.
$$
Therefore, $\La\equiv 0$ on $\Rsub^4$.

Let $(\al,\be,\g)\in\Rsub^2$,
whence $(\al^{-1},\be^{-1},\g)\in \Rsub^1$ by \eqref{eq:gamma=}. 
As above, 
$$
\La(\al,\be,\g) = \La(\al^{-1},\be^{-1},\g)=0,
$$
whence $\La\equiv 0$ on $\Rsub^2$. The case of $\Rsub^3$ can be deduced as was $\Rsub^4$.

\section*{Acknowledgements} 
This work was supported in part
by the Engineering and Physical Sciences Research Council under grant EP/I03372X/1. 
ZL acknowledges support from the Simons Foundation under  grant $\#$351813.
The authors are grateful to two referees for their suggestions,
which have improved the presentation of the work.

\bibliography{poly22}
\bibliographystyle{amsplain}

\end{document}